\newtheorem{definition}{Definition}
\newtheorem{lemma}{Lemma}
\newtheorem{theorem}{Theorem}
\newcommand{\souffle}{Souffl\'{e}\xspace}
\newcommand{\DOOP}{\textsc{Doop}\xspace}
\newcommand{\predicate}[1][]{R_{#1}(X_{#1})}
\newcommand{\tuple}[1][]{t_{#1}}
\newcommand{\level}{h}
\newcommand{\constraints}{\psi}
\newcommand{\ptree}{\tau}
\newcommand{\ptrees}{T}
\newcommand{\immediateconsequence}{\Gamma_P}
\newcommand{\program}{P}
\newcommand{\instance}{I}
\newcommand{\inputinstance}{EDB}
\newcommand{\treeconsequence}{\mathcal{T}_P}
\newcommand{\dlrule}{\rho}
\newcommand{\dlimpl}{\textit{ :- }}
\newcommand{\brackets}[1]{\left({#1}\right)}
\newcommand{\braces}[1]{\left \{ {#1} \right \}}
\DeclareMathOperator{\rulenum}{@rule}
\DeclareMathOperator{\levelnum}{@height}
\begin{document}

\title{Provenance for Large-scale Datalog}
\author[$\dagger$]{David Zhao }
\author[$\star$]{Pavle Suboti\'c}
\author[$\dagger$]{Bernhard Scholz}

\affil[$\dagger$]{School of Computer Science, The University of Sydney.}
\affil[$\star$]{School of Computer Science, University College London.}

\maketitle

\begin{abstract}
Logic programming languages such as Datalog have become popular as
Domain Specific Languages (DSLs) for solving large-scale, real-world
problems, in particular, static program analysis 
and network analysis. The logic specifications which model analysis problems, 
process millions of tuples of data and contain hundreds of highly recursive 
rules. As a result, they are notoriously difficult to debug. While the
database community has proposed several data provenance
techniques that address the \emph{Declarative
  Debugging Challenge} for Databases, in the cases of analysis problems, these
state-of-the-art techniques do not scale.

In this paper, we introduce a novel bottom-up Datalog evaluation strategy for debugging:
our provenance evaluation strategy relies on a new provenance lattice that includes
proof annotations, and a new fixed-point semantics for semi-na\"ive evaluation. A debugging query mechanism allows arbitrary provenance queries, constructing partial proof trees of tuples with minimal height.
We integrate our technique into \souffle, a Datalog engine that synthesizes C++ code, and achieve high performance by using specialized parallel data structures. Experiments are conducted with 
\DOOP/DaCapo, producing proof annotations for tens of millions of output tuples. We show that
our method has a runtime overhead of 1.27$\times$ on average while being more flexible than existing state-of-the-art 
techniques.
\end{abstract}

\section{Introduction}
Datalog and other logic specification languages~\cite{so16,flix16,logicblox15,z311} have seen
a rise in popularity in recent years, being widely used to solve
real-world problems including program analysis~\cite{so16,pointsto15},
declarative networking~\cite{eq10,de11}, security analysis~\cite{mv05}
and business applications~\cite{logicblox15}. Logic programming
provides declarative semantics for computation, resulting in succinct
program representations and rapid-prototyping capabilities for scientific and
industrial applications. Rather than prescribing the computational
steps imperatively, logic specifications define the intended result
declaratively and thus can express computations concisely. 
For example, logic programming has gained traction in the
area of program analysis due to its flexibility in building custom
program analyzers~\cite{so16}, points-to  
analyses for Java programs~\cite{doop09}, and security analysis
for smart contracts~\cite{mm18,icse19}.

Despite the numerous advantages, the declarative semantics of Datalog poses
a debugging challenge. Strategies employed in debugging imperative programs
such as inspecting variables at given points in the program execution do not
translate to declarative programming. Logic specifications lack the notions of state 
and state transitions. Instead, they have
relations that contain tuples. These relations may be input relations, such as 
those describing the instance of an analysis, intermediate relations, or 
output relations, such as those containing the results of an analysis. Relations 
can only be viewed in full, without any explanation of the origin or derivation of
data, after the completion of a complicated evaluation strategy. Thus, the 
Datalog user will find the results alone of logic evaluation inconclusive for debugging 
purposes. 

When debugging Datalog specifications, there are two main scenarios: (1) an 
unexpected output tuple appears, or (2) an expected output tuple does not appear. These 
may occur as a result of a fault in the input data, and/or a fault in the logic rules. 
Both scenarios call for mechanisms to explain  how an output tuple is 
derived, or why the tuple cannot be derived from 
the input tuples. The standard mechanism for these explanations is a \emph{proof tree}. In the case of explaining the 
existence of an unexpected tuple, a proof tree describes formally the sequence of rule 
applications involved in generating the tuple. On the other hand, a failed proof tree,
where at least one part of the proof tree doesn't hold,
may explain why an expected tuple cannot be derived in the logic specification. These proof 
trees can be seen as a form of \emph{data provenance} witness, that is, an explanation 
vehicle for the origins of data~\cite{ww01,pd09}. 

In the presence of complex Datalog specifications and large datasets, Datalog debugging 
becomes an even bigger challenge. While recent developments in Datalog 
evaluation engines, such as \souffle~\cite{so16}, have enabled the effective evaluation 
of complex Datalog specifications with large data using
scalable bottom-up evaluation 
strategies~\cite{bb89,td91}, unlike top-down evaluation, bottom-up evaluation 
does not have an explicit notion of a proof tree in its evaluation. Therefore, to facilitate debugging in bottom-up evaluation, 
state-of-the-art~\cite{ec17,sp15,dd12} techniques have been developed that rewrite the 
Datalog specification with provenance information. Using these techniques, 
Datalog users follow a \emph{debugging cycle} which allows them to find 
anomalies in the input relations and/or the logic rules. In such setups, the typical 
debugging cycle comprises the phases of (1) defining an investigation query, 
(2) evaluating the logic specification to produce provenance witness, (3) investigating the 
faults based on the provenance information, and (4) fixing the faults. 
For complex Datalog specifications, the need for re-evaluation for each investigation 
is impractical. For example, \DOOP~\cite{doop09} with a highly precise analysis setting 
may take multiple days to evaluate for medium to large-sized Java programs. Although state-of-the-art approaches scale for the database querying use cases, 
such approaches are not practical for industrial scale static analysis problems.

A further difficulty in developing debugging support for Datalog is providing 
understandable provenance witnesses. Use cases such as program analysis tend to produce  
proof trees of very large height. For example, investigations on medium sized program 
analyses in \DOOP have minimal height proof trees of over 200 nodes. Therefore, 
a careful balance must be struck between enough information and readability in the 
debugging witnesses. 

In this paper, we present a novel debugging approach that targets Datalog 
programs with characteristics of those found in static program analysis. Our approach scales 
to large dataset and ruleset sizes and provides succinct and 
interactively navigable provenance information. 

The first aspect of our technique is a novel Datalog provenance evaluation strategy 
that augments the intensional database (IDB) with \emph{Proof Annotations} and hence 
allows fast proof tree exploration for \emph{all} debugging queries, \emph{without 
the need for re-evaluation}. The exploration uses the proof annotations to construct 
proof trees for tuples \emph{lazily}, i.e., a 
debugging query for a tuple produces the rule and the subproofs of the rule. The 
subproofs when expanded in consecutive debugging queries, will produce
 a \emph{minimal} height proof tree for the given tuple. Our system also supports  
  \emph{non-existence} explanations of a tuple. In this 
case, proof annotations are not helpful since they cannot describe non-existent 
tuples. Thus, we adapt an approach from \cite{ps18} to provide a \emph{user-guided} 
procedure for explaining the non-existence of tuples.

We implement the provenance evaluation strategy in the synthesis framework of 
\souffle~\cite{so16}, to produce \emph{specialized} data structures and an interactive 
debugging query system for each logic specification. Our approach is tightly integrated into the 
\souffle engine, thus achieving high performance and generalizability that no 
previous provenance approach is able to achieve. We demonstrate the feasibility 
of our technique through the complex Java points-to framework, \DOOP, running the 
Java DaCapo benchmark suite, which produces tens of millions of output tuples. We 
demonstrate that the initial implementation of our novel provenance method incurs a 
runtime overhead of 1.27$\times$, and memory consumption overhead of 1.45$\times$ 
on average. Thus, our provenance evaluation strategy is capable of processing large 
datasets with no performance disadvantage compared to existing techniques, while 
being more flexible for answering debugging queries.

Our contributions in this work are as follows:
\begin{itemize}
    \item a provenance evaluation strategy for Datalog specifications, defining a new evaluation  domain based on a provenance lattice which extends the standard Datalog subset lattice with proof annotations,
    \item the leveraging of parallel bottom-up evaluation to give minimal height proof trees, and provenance queries for constructing  minimal height proof trees utilizing proof annotations, allowing effective bug investigation with a minimum number of user interactions,
    \item an efficient and scalable integration of the proof tree generator system into \souffle, using specialized data structures for storing proof annotations, and
    \item large-scale experiments using the \DOOP program analysis framework with DaCapo benchmarks with tens of millions of tuples, measuring on average 1.27$\times$ overheads for runtime and 1.45$\times$ overheads for memory.
\end{itemize}

The paper is organized as follows: In Section~\ref{sec:problem} we motivate our 
provenance method and describe 
its use in a real-world program analysis use case. In 
Section~\ref{sec:approach} we detail the theoretical basis of our method with regards to the provenance evaluation strategy along with the provenance queries to construct proof trees for tuples. We also demonstrate the minimality properties and present the practical solution that results from this theory. In Section~\ref{sec:implementation} we detail the implementation of our system in \souffle. In Section~\ref{sec:experiments} we present experiments that show the feasibility of our provenance system. In Section~\ref{sec:related} we outline related work, and we conclude in Section~\ref{sec:conclusion}.

\section{Motivation and Problem Statement}\label{sec:problem}

\begin{figure*}
\begin{subfigure}[b]{0.23\textwidth}
\begin{Verbatim}
l1: a = new O();
l2: b = a;

l3: c = new P();
l4: d = new P();

l5: c.f = a;
l6: e = d.f;
l7: b = c.f;
l8: a = b;
\end{Verbatim}
\caption{Input Program\label{ex:program}}
\end{subfigure}
\begin{subfigure}[b]{0.2\textwidth}
\begin{Verbatim}
new(a, l1).
assign(b, a).

new(c, l3).
new(d, l4).

store(c, f, a).
load(e, d, f).
load(b, c, f).
assign(a, b).
\end{Verbatim}
        \caption{EDB Tuples\label{ex:relation}}
\end{subfigure}
\begin{subfigure}[b]{0.47\textwidth}
\begin{Verbatim}
r1: vpt(Var, Obj) :- new(Var, Obj).
r2: vpt(Var, Obj) :- assign(Var, Var2),
                     vpt(Var2, Obj).
r3: vpt(Var, Obj) :- load(Var, Y, F),
                     store(P, F, Q),
                     vpt(Q, Obj),
                     alias(P, Y).
r4: alias(Var1, Var2) :- vpt(Var1, Obj),
                         vpt(Var2, Obj),
                         Var1 != Var2.
\end{Verbatim}
\caption{Datalog Points-to Analysis\label{ex:analysis}}
\end{subfigure}
\caption{Program Analysis Datalog Setup~\label{fig:example}}
\end{figure*}

\begin{figure*}
\centering
\begin{tikzpicture}[ el/.style = {inner sep=2pt, align=left, sloped},
every label/.append style = {font=\tiny}
]
    \node[shape=circle,draw=black] (A) at (0,2) {$a$};
    \node[shape=circle,draw=black] (B) at (0,0) {$b$};
    \node[shape=circle,draw=black] (C) at (2,2) {$c$};

    \node[shape=circle,draw=black] (E) at (4,0) {$e$};
    \node[shape=circle,draw=black] (D) at (4,2) {$d$};

    \node[shape=rectangle,draw=black] (L1) at (0,4) {$l_1$};
    \node[shape=rectangle,draw=black] (L2) at (2,4) {$l_3$};
    \node[shape=rectangle,draw=black] (L3) at (4,4) {$l_4$};

    \path [->, dashed](A) edge node[el, above] {$new$} (L1);
    \path [->, dashed, bend left](B) edge node[el, above] {$assign$} (A);
    \path [->, dashed, bend left](A) edge node[el, above] {$assign$} (B);
    \path [->, dashed, bend right](B) edge node[el, below] {$load[f]$} (C);
    \path [->, dashed](A) edge node[el, above] {$store[f]$} (C);
    \path [->, dashed](C) edge node[el, below] {$new$} (L2);
    \path [->, dashed](D) edge node[el, below] {$new$} (L3);
    \path [->, dashed](D) edge node[el, above, rotate=180] {$load[f]$} (E);

\end{tikzpicture}
    \caption{Points-to Input Diagram\label{ex:diagram}}
\end{figure*}
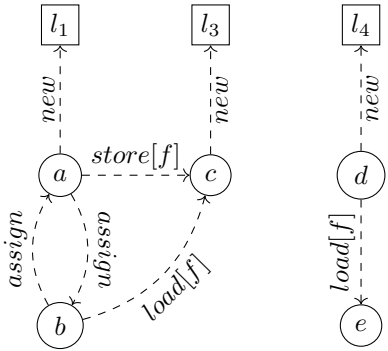

A common approach to characterize the evaluation of a Datalog specification is through \emph{proof trees}. A proof tree for a tuple describes the derivation of that tuple from input tuples and rules. During the debugging cycle, proof trees \emph{explain} why an output tuple exists, and therefore are critical for any investigation into anomalies.
Note that potentially there could be an infinite number of proof trees for the explanation of any given tuple. However, end users 
desire concise proof trees such that that the faulty behaviour of the logic specification is revealed quickly.
In this section, we describe how proof trees can be used to debug a Datalog specification and an overview of our method for generating minimal proof trees for output tuples.

\subsection{Use Case: Program Analysis}

\subsubsection{Points-To Analysis}
We illustrate the utility of debugging via proof trees through a program analysis use case. Figure~\ref{fig:example} illustrates a points-to 
analysis implemented in Datalog. The points-to analysis resembles a field-sensitive but flow-insensitive analysis~\cite{Sridharan:2005:DPA:1094811.1094817}. The input relations (also known as EDB) of the logic specification are the relations \textit{new}, \textit{assign}, \textit{load}, and \textit{store}
express the input program in relational form. The relation \textit{new} represent the object-creation sites of the input program, the
relation \textit{assign}  the assignments, and relations \textit{load}/\textit{store} the read and write accesses of objects via a field. 
Figure~\ref{ex:program} shows an input program encoded in the form of input relations in Figure~\ref{ex:relation}. 
The graph in Figure~\ref{ex:diagram} represents the input relations. The nodes represent either object-creation sites or variables. The edges are object-creation sites, assignments, and load/store instructions.  The graph shows a clear separation between objects of $l_4$ with objects $l_1$ and $l_3$.
The goal is to compute the var-points-to set in the form of the output relation \textit{vpt}. 
The Datalog rules computing the var-points-to set are given in Figure~\ref{ex:analysis}. The first rule makes the variable \texttt{Var} point to object \texttt{Obj}  where \texttt{Obj} is the line number of the object-creation site as an abstraction for all possible objects that could be created by this object-creation site. 
The second rule shows the transfer of the var-points-to set from  source \texttt{Var2} of the assignment to its destination \texttt{Var}. The third rule 
transfers the var-points-to set from the source of a store instruction \texttt{Q} to the destination of a load instruction \texttt{Var}. 
The transfer is conditional 
depending on whether field \texttt{F} of the load and store instructions match and whether the instance variable \texttt{Y} of the load and 
the instance variable \texttt{P} of the store 
instruction alias. The last rule expresses the alias relation between variables \texttt{Var1} and \texttt{Var2}, i.e., two variables alias if they share at least 
one object-creation site \texttt{Obj} in their var-points-to sets. The relations \texttt{vpt} and \texttt{alias} are the output of the analysis and are called the IDB of the Datalog specification.

\subsubsection{Minimal Height Proof Trees}
The analysis example in Figure~\ref{fig:example} computes the output relation $\textit{alias}$ that captures the alias
information of two variables.  A user may investigate \emph{why} a tuple $(a,b)$ exists in the output relation \textit{alias}, 
i.e., how the analysis derives $\textit{alias}(a,b)$ from the input data via the rules. 
Intuitively, this information is contained in the points-to input diagram~(cf.~Figure~\ref{ex:diagram}) 
showing that variables $a$ and $b$ may reach the same object. However, it is not an explanation, as a proof
tree would be, for the tuple $\textit{alias}(a,b)$ as shown in Figure~\ref{fig:ptree}. 
The proof tree shows that $\textit{alias}(a, b)$ is derived by rule $r_4$ using the facts $\textit{vpt}(a, l_1)$ and $\textit{vpt}(b, l_1)$. 
This outcome is expected since it tells us that $a$ and $b$ point to the same object ($l_1$ in this case), and 
thus they may alias.
\begin{figure}[h]
    \begin{prooftree}
        \AxiomC{$new(a, l_1)$}
        \RightLabel{$r_1$}
        \UnaryInfC{$vpt(a, l_1)$}

        \AxiomC{$assign(b, a)$}
        \AxiomC{$new(a, l_1)$}
        \RightLabel{$r_1$}
        \UnaryInfC{$vpt(a, l_1)$}

        \RightLabel{$r_2$}
        \BinaryInfC{$vpt(b, l_1)$}

        \AxiomC{$a \neq b$}

        \RightLabel{$r_4$}
        \TrinaryInfC{$alias(a, b)$}
        
    \end{prooftree}
    \caption{Full proof tree for $alias(a, b)$}\label{fig:ptree}
\end{figure}

The importance of minimality of proof tree height is shown in Figure~\ref{fig:ptree}, which depicts the proof tree resulting from the assignment in line $l_2$ in the input program. In the input program, there is a circular assignment in lines $l_2$ and $l_8$ caused by the flow-insensitivity of the input program, 
and thus the tuple $vpt(b, l_1)$ could be derived in an arbitrary number of rule applications, as shown in Figure~\ref{fig:infiniteptrees}.
\begin{figure}[h]
\begin{prooftree}
    \AxiomC{$assign(a, b)$}

    \AxiomC{$assign(b, a)$}
    \AxiomC{$\ldots$}
    \UnaryInfC{$vpt(b, l_1)$}
    \RightLabel{$r_2$}
    \BinaryInfC{$vpt(a, l_1)$}
    \RightLabel{$r_2$}
    \BinaryInfC{$vpt(b, l_1)$}
\end{prooftree}
    \caption{Infinitely many derivations for $\textit{vpt}(b, l_1)$, resulting from the circular assignment in lines $l_2$ and $l_8$ in the input program \label{fig:infiniteptrees}}
\end{figure}

Thus, even for this small example, there are infinitely many valid proof trees for the tuple $\textit{alias}(a, b)$.
A provenance system ought to produce the most concise proof tree so that an end user can 
understand the derivation of a tuple with the least effort. 

\subsubsection{Proof Tree Fragments for Debugging}
Suppose a Datalog user discovers an unexpected tuple in the output, which indicates that a fault exists somewhere in the 
logic specification. The aim is to investigate the root cause of this fault. Since proof trees provide explanations for the existence of a tuple, 
the proof tree of an unexpected tuple will help identify the fault in the logic specification.

An example fault could be if rule $r_3$ was altered as follows,

\begin{figure}[h]
\centering
\begin{varwidth}{\linewidth}
\begin{verbatim}
r3: vpt(Var, Obj) :- load(Var, Y, F),
                     store(P, F, Q),
                     vpt(Q, Obj),
                     vpt(P, Obj1),
                     vpt(Y, Obj2).
\end{verbatim}
\end{varwidth}
\end{figure}

Note the condition that objects \texttt{P} and \texttt{Y} must alias now no longer holds. 
A minor typo may have introduced this fault, and as a consequence of this typo, the analysis produces the extra tuple $(a, e)$ in 
relation \textit{alias}. This additional tuple becomes a \emph{symptom} of the fault. 
To diagnose this fault, the proof tree of tuple $\textit{alias}(a, e)$ highlights the root cause of the fault.

However, in practice, a full proof tree may be too large to provide a meaningful explanation even if it is of minimal height, and as 
experiments in Section~\ref{sec:experiments} show, proof trees for real-world program analyses (e.g., \DOOP) can exceed heights of 200. Thus a Datalog user may want to explore only relevant \emph{fragments} of it interactively. A fragment of a proof tree is a partial subtree, which consists of some number of levels. For instance, we may construct fragments comprising of 2 levels to explore only parts of the proof tree that are relevant.

We illustrate the exploration of fragments of the proof tree in Figure~\ref{fig:exploration}. In the figure, tuple $\tuple$ denotes the symptom of the fault, i.e., $\tuple$ is an unexpected tuple in the output. The aim is to explore the proof tree for $\tuple$  to find the root cause for this fault. In our example, the user follows the scent of the fault by expanding proof tree fragments that show anomalies. This process produces a path of exploration in the proof tree. The path of exploration discovers the root cause of the fault efficiently, without constructing and displaying the full proof tree of an output tuple.

\begin{figure}
    \tikzset{rounded rectangle/.default={2mm},
          rounded rectangle/.style= { to path={($(\tikztostart)!#1!(\tikztotarget)$)
          -- ($(\tikztostart)!#1!(\tikztotarget)!#1!-90:(\tikztotarget)$)
          -- ($(\tikztostart)!#1!-90:(\tikztotarget)$)
          (\tikztotarget)}}
    }
   \centering
    \begin{tikzpicture}[ el/.style = {inner sep=2pt, align=left, sloped},
    every label/.append style = {font=\tiny}, y=-1cm
    ]
        \node (A) at (2,4) {\tiny $\tuple$};
        \coordinate (B) at (0,0);
        \coordinate (C) at (4,0);

        \coordinate (D) at (1,2);
        \coordinate (E) at (3,2);

        \node (F) at (2.5, 2.3) {\tiny $\tuple[n]$};
        \coordinate (G) at (2, 1);
        \coordinate (H) at (3, 1);

        \node (label) at (1.5, 2.3) {\tiny $\tuple[1]$};
        \node (label2) at (2, 2.3) {\tiny $\ldots$};

        \node (I) at (2.7, 1.3) {};
        \coordinate (J) at (2.2, 0.1);
        \coordinate (K) at (3.2, 0.1);

        \node (x) at (2.9, 0.3) {\tiny $x$};

        \path (A) edge (B);
        \path (B) edge (C);
        \path (A) edge (C);

        \path [dashed] (A) edge (D);
        \path [dashed] (E) edge (D);
        \path [dashed] (E) edge (A);

        \path [dashed] (F) edge (G);
        \path [dashed] (G) edge (H);
        \path [dashed] (H) edge (F);

        \path [dashed] (I) edge (J);
        \path [dashed] (J) edge (K);
        \path [dashed] (K) edge (I);

        \path [red, ->] (A) edge (F);
        \path [red, ->] (F) edge (x);

    \end{tikzpicture}
    \caption{Interactive exploration of fragments of a proof tree for $\tuple$ \label{fig:exploration}}
\end{figure}

Concretely, we may wish to explain the tuple $\textit{alias}(a, e)$. Figure~\ref{fig:fragment} illustrates the exploration of an explanation for $\textit{alias}(a, e)$ by generating proof tree fragments of 2 levels at a time. The user generates the first fragment and decides that $\textit{vpt}(e, l_1)$ is the most relevant explanation for the fault, and continues down this path. As a result, the root cause (for example, the erroneous rule $r_3$) is discovered after two fragments. This interaction mechanism also justifies the choice to minimize the height of proof trees. By doing this, we minimize the number of user interactions (i.e., proof tree fragments) required to discover the root cause for an anomaly.

Note that provenance queries for databases~(cf.~\cite{sp15, dd12, ec17}) have in general 
shorter evaluation times, and, hence, Database approaches may re-evaluate for each provenance 
query the whole specification. However, for large and complex Datalog specifications, 
provenance queries as shown before are to be performed interactively and promptly, i.e., in a single debugging cycle.
The Datalog user may want to issue multiple provenance queries in a single investigation phase of a debugging cycle. 
Hence, a re-evaluation of the whole logic specification for each provenance query becomes prohibitive, and 
a new provenance approach for Datalog is required.

\begin{figure*}
    \begin{subfigure}[b]{0.9\textwidth}
        \begin{prooftree}
            \AxiomC{$load(e, d, f)$}
            \AxiomC{$store(c, f, a)$}
            \AxiomC{$vpt(a, l_1)$}
            \AxiomC{$vpt(c, l_3)$}
            \AxiomC{$vpt(d, l_4)$}

            \RightLabel{$r_3$}
            \QuinaryInfC{$\color{blue} vpt(e, l_1)$}
        \end{prooftree}
    \end{subfigure}

    \vspace{1em}

    \begin{subfigure}[b]{0.9\textwidth}
        \begin{prooftree}
            \AxiomC{$vpt(a, l_1)$}
            \AxiomC{$\color{blue} vpt(e, l_1)$}
            \AxiomC{$a \neq e$}

            \RightLabel{$r_4$}
            \TrinaryInfC{$\boldsymbol{alias(a, e)}$}
        \end{prooftree}
    \end{subfigure}
    \caption{Exploring the proof of $alias(a, e)$ to find the erroneous rule $r_3$ \label{fig:fragment}}
\end{figure*}

\subsection{Proof Trees and Problem Statement}

We use standard terminology for Datalog, taken from \cite{fd95}. A Datalog specification $\program$ consists of a set of \emph{rules}, of the form $\predicate[0] \dlimpl \predicate[1], \ldots, \predicate[n], \constraints(X_1, \ldots, X_n)$. Each $\predicate[i]$ is a \emph{predicate}, consisting of a \emph{relation name} $R_i$ and an argument $X_i$ consisting of the correct number of variables and constants. The term $\constraints(X_1, \ldots, X_n)$ denotes a conjunction of constraints on the variables in the rule. These constraints may include, for example, arithmetic constraints (such as less than), or negation of a predicate.
A predicate can be \emph{instantiated} to form a \emph{tuple} where each variable is mapped to a constant. An instantiated rule is a rule with each predicate replaced by its instantiation such that the variable mappings are consistent between predicates and the constraints are satisfied. A set of tuples forms an instance $\instance$, and we denote the input instance to be $\inputinstance$.

Given a Datalog specification $\program$, an input instance $\inputinstance$ of $\program$, and a tuple $\tuple$ produced by $\program$, we want to find a \emph{proof tree} of minimal height for $\tuple$. We define a proof tree as follows:
\begin{definition}[Proof Tree]
    Let $\program$ be a Datalog specification, and let  $\inputinstance$ be an input instance. A \emph{proof tree} $\ptree_{\tuple}$ for a tuple $\tuple$ computed by $\program$ is a labeled tree where 
        (1) each vertex is labeled with a tuple,
        (2) each leaf is labeled with an input tuple in $\inputinstance$,
        (3) the root is labeled with $\tuple$, and
        (4) for a vertex labeled with $\tuple[0]$, there is a valid instantiation $\tuple[0] \dlimpl \tuple[1], \ldots, \tuple[n]$ of a rule $\dlrule$ in $\program$ such that the direct children of $\tuple[0]$ are labeled with $\tuple[1], \ldots, \tuple[n]$. Moreover, the vertex is associated with $\dlrule$.
\end{definition}

A proof tree for $\tuple$ can be viewed as an \emph{explanation} for the existence of $\tuple$, by showing how it is derived from other tuples using the rules in the Datalog specification. Importantly, in the context of Datalog, there is a strong connection between a proof tree and the \emph{model} of the specification. A model corresponds to a set of tuples that satisfy a Datalog specification when it is viewed as a constraint system. In a proof tree, each node corresponds to exactly one tuple from the model since each instantiation for the body of a rule generates exactly one tuple.

To formalize the problem statement, we need to characterize proof trees of minimal height. Note that the set of proof trees for a Datalog specification could be constructed inductively by the height of the trees. We denote $\ptree_{\tuple}$ to be a proof tree for tuple $\tuple$, and $\ptrees^{k}$ to be the set of proof trees of height at most $k$. This construction leads to a convenient description of what it means for a proof tree to be of minimal height.

\begin{definition}
    We define the set of all proof trees inductively. Let
    $
        \ptrees^{0} = \braces{\ptree_{\tuple} \mid \tuple \in \inputinstance}
    $
    be the set of proof trees for tuples in the input instance. Then, define $\ptrees^{k}$ in terms of $\ptrees^{k-1}$:
    $
        \ptrees^{k} = \{\ptree_{\tuple} \mid \tuple \dlimpl \tuple[1], \ldots, \tuple[n] $ is a valid instantiation and $ \forall \tuple[i]: \exists \ptree_{\tuple[i]} \in \ptrees^{k-1} \}.
   $
    Then,
    $        \ptrees = \bigcup_{i \geq 0} \ptrees^{i}
    $
    is the set of all proof trees produced by the specification $\program$.
\end{definition}

Note that each $\ptrees^{k}$ consists of proof trees of height at most $k$ since if $\tuple \dlimpl \tuple[1], \ldots, \tuple[n]$ is an instantiation of a rule, then the height of the proof tree for $\tuple$ is equal to the maximum height of the proof trees for $\tuple[1], \ldots, \tuple[n]$ plus 1.
By defining the set of proof trees inductively, a proof tree of minimal height for a given tuple $\tuple$ has height given by 
\begin{align*}
    \min \braces{k \geq 0 \mid \exists \ptree_{\tuple} \in \ptrees^{k}}.
\end{align*}

Intuitively, this means that a proof tree for a tuple $\tuple$ is of minimal height if there does not exist another valid proof tree with a smaller height. We emphasize that a valid proof tree must exist since we have assumed that tuple is in the IDB of the Datalog specification and therefore its existence can be proved.
Based on this inductive construction of proof trees, we reduce the problem of generating a fragment of a proof tree into the following incremental search problem.

\paragraph{Problem statement:} Let $\program$ be a Datalog specification, and $\instance$ be the instance computed by $\program$. Then, given a tuple $\tuple \in \instance$, find the tuples $\tuple[1], \ldots, \tuple[n]$ that form the direct children of $\tuple$ in a minimal height proof tree for $\tuple$, as depicted in Figure~\ref{fig:partialptree}. Denote $\tuple[1], \ldots, \tuple[n]$ to be a \emph{configuration} of the body of the corresponding rule.

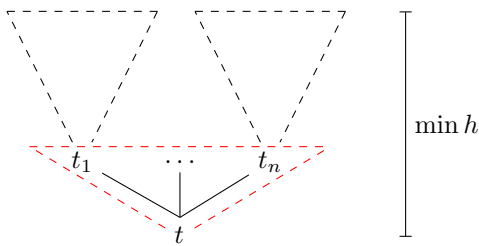
\begin{figure}[H]
\centering
    \begin{tikzpicture}[ el/.style = {inner sep=2pt, align=left, sloped},
    every label/.append style = {font=\tiny}, y=-1cm
    ]
        \node (A) at (2,3) {$\tuple$};
        \coordinate (B) at (0,1.8);
        \coordinate (C) at (4,1.8);

        \node (F) at (0.7, 2) {$\tuple[1]$};
        \coordinate (G) at (-0.3, 0);
        \coordinate (H) at (1.7, 0);

        \node (I) at (3.2, 2) {$\tuple[n]$};
        \coordinate (J) at (2.2, 0);
        \coordinate (K) at (4.2, 0);

        \node (dots) at (2, 2) {$\ldots$};

        \path [dashed, red] (A) edge (B);
        \path [dashed, red] (B) edge (C);
        \path [dashed, red] (A) edge (C);

        \path [dashed] (F) edge (G);
        \path [dashed] (G) edge (H);
        \path [dashed] (H) edge (F);

        \path [dashed] (I) edge (J);
        \path [dashed] (J) edge (K);
        \path [dashed] (K) edge (I);

        \path (A.north) edge (F);
        \path (A.north) edge (I);
        \path (A.north) edge (dots);

        \coordinate (top) at (5, 3);
        \coordinate (bot) at (5, 0);

        \path [|-|] (top) edge node[right]{$\min \level$} (bot);

    \end{tikzpicture}
    \caption{One level of a proof tree of minimal height for $\tuple$ \label{fig:partialptree}}
\end{figure}

Note that we could recursively construct the subtrees rooted at each $\tuple[i]$, which form valid proof trees for these tuples. Thus, this recursive construction solves the original problem of constructing a fragment of the proof tree of minimal height. Once a certain number of levels have been constructed, or if the only remaining leaves are in the EDB (characterized by having a proof tree consisting of only a single node), then the fragment is complete.

\section{A New Provenance Method}\label{sec:approach}
A simple solution to this problem might be to generate a minimal height proof tree by brute-force searching for matching tuples and return the direct children of the root node. However, this is an unfeasible approach for real-world problems where there may be millions of tuples, and there are also no guarantees that the produced proof trees are of minimal height. Moreover, the two main evaluation strategies for Datalog, \emph{bottom-up} and \emph{top-down} are unsuitable for solving this problem on their own. Bottom-up evaluation is an efficient method for generating tuples but does not store any information related to proof trees. On the other hand, top-down evaluation does compute proof trees as part of its execution, but there are no guarantees for minimality of height. Additionally, to prove the existence of a particular tuple requires proving the existence of every intermediate tuple up to the input tuples, and thus the problem of generating only fragments of proof trees cannot be solved by top-down. Thus, we present a hybrid solution for generating proof trees, consisting of a provenance evaluation strategy based on bottom-up evaluation, plus a debugging query mechanism to construct proof trees.

We summarize the system in Figure~\ref{fig:system}. The Datalog specification and input tuples (EDB) are the input into the system. A pre-processing step of provenance Datalog evaluation generates a set of tuples (IDB) alongside proof annotations for these tuples. The annotated tuples form the input into the interactive proof tree generator system.

The proof tree generator is at the core of the interactive exploration of proofs for tuples. A user queries for a fragment of a proof tree, e.g., \emph{two levels of a proof tree for vpt(b, l1)}, and the system returns the corresponding result. This system can answer any number of queries, and the user can query for \emph{any} fragment of the proof tree for \emph{any} tuple. As previously mentioned, this allows the user to interactively explore the proof for a tuple and find a meaningful explanation for a tuple.

The provenance evaluation strategy resembles a pre-computation step of the Datalog. The evaluation is performed only \emph{once}, but the IDB with proof annotations can subsequently answer \emph{any} debugging query using the same IDB resulting from evaluation. The ability to answer any debugging query without re-evaluation is an advantage over other selective provenance systems~\cite{sp15, ec17}, where the query is given prior to evaluation, which is then instrumented based on the query, and thus evaluation must be performed for each different query.

\begin{figure}[ht]
\centering
    \scriptsize
    \begin{tikzpicture}[scale=0.6]
        \usetikzlibrary{shapes}
        \node [draw, ellipse, minimum height=0.8cm, align=center, text width=1cm, name=input] at (0, -1){EDB};
        \node [draw, ellipse, minimum height=0.8cm, align=center, text width=1cm, name=prog] at (0, 1) {Datalog Spec};

        \node [draw, rectangle, align=center, text width=2cm, name=eval] at (4, 0) {Provenance Evaluation};

        \node [draw, ellipse, minimum height=0.8cm,align=center, text width=1cm, name=idb] at (8, -1) {IDB};
        \node [draw, ellipse, minimum height=0.8cm,align=center, text width=1cm,  name=anno] at (8, 1) {Proof Annotations};

        \node [draw, rectangle, align=center, text width=2cm, name=prov] at (12, 0) {Proof Tree Generator};

        \node [draw, ellipse, align=center, text width=1cm, name=query] at (15, 2) {Query};
        \node [draw, ellipse, align=center, text width=1cm, name=tree] at (15, -2) {Proof Tree Fragment};

        \node [draw, circle, minimum size=0.7, name=person_head] at (17, 0.8) {};
        \draw (17, 0.6) -- (17, 0.1);
        \draw (17.2, 0.5) -- (16.8, 0.5);
        \draw (17, 0.1) -- (16.8, -0.1);
        \draw (17, 0.1) -- (17.2, -0.1);

        \node [minimum size=0, name=person_feet] at (17, 0.1) {};

        \path [draw, ->,shorten >= 1.5pt] (input) -- (eval.west);
        \path [draw, ->] (eval.east) -- (anno);
        \path [draw, ->] (eval.east) -- (idb);
        \path [draw, ->,shorten >= 1.5pt] (prog) -- (eval.west);
        \path [draw, ->,shorten >= 3pt] (prog) to [bend left] (prov.north);
        \path [draw, ->,shorten >= 1.5pt] (idb) -- (prov.west);
        \path [draw, ->,shorten >= 1.5pt] (anno) -- (prov.west);

        \path [draw, ->, shorten >= 3pt, shorten <= 3pt] (person_head.north) to [bend right] (query.east);
        \path [draw, ->, shorten >= 3pt, shorten <= 3pt] (query.west) to [bend right] (prov.north);
        \path [draw, ->, shorten >= 3pt, shorten <= 3pt] (prov.south) to [bend right] (tree.west);
        \path [draw, ->, shorten >= 3pt, shorten <= 3pt] (tree.east) to [bend right] (person_feet.south);
        \path [draw, ->, shorten >= 3pt] (idb.east) to [bend right=10] (person_feet.west);

        \draw [|-|] (-1, 3) -- node[above]{Pre-processing Datalog evaluation} (11.5, 3);
        \draw [|-|] (12, 3) -- node[above]{Proof tree exploration} (17, 3);
    \end{tikzpicture}
    \caption{Synthesized Proof Tree Generator system\label{fig:system}}
\end{figure}
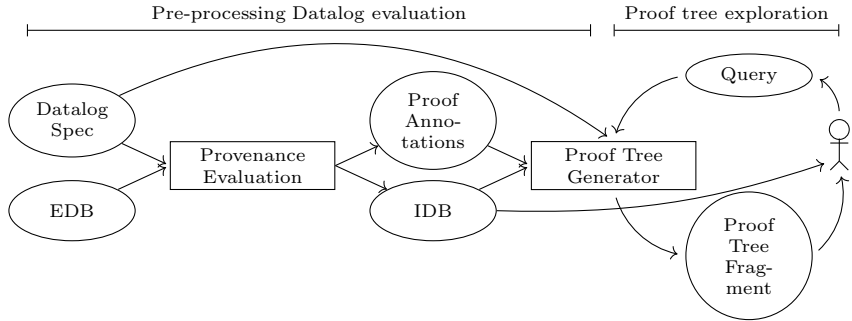

\subsection{Standard Bottom-Up Evaluation}
The basis of our approach is the standard bottom-up evaluation strategy for Datalog specification \cite{fd95}. The computational domain of standard bottom-up evaluation is the subset lattice consisting of sets of tuples, denoted \emph{instances} $\instance$. The \emph{na\"ive} algorithm for evaluation is based on the immediate consequence operator, $\immediateconsequence$, which generates new tuples by applying rules in the Datalog specification to tuples in the current instance.
\begin{align*}
    \immediateconsequence\brackets{\instance} = \instance \cup \braces{\tuple \mid \tuple \dlimpl \tuple[1], \ldots, \tuple[n] \text{is a valid instantiation of a rule in } \program \text{ with each } \tuple[i] \in \instance}
\end{align*}

The result of Datalog evaluation is attained when $\immediateconsequence$ reaches a fixpoint, i.e., when $\immediateconsequence\brackets{\instance} = \instance$. Note that this evaluation appears closely related to the inductive construction of proof trees, and indeed the set of tuples represented by $\ptrees^{i}$ is equal to the set of tuples generated by the $i$-th application of $\immediateconsequence$.

However, this na\"ive evaluation will repeat computations since a tuple computed in some iteration will then be recomputed in every subsequent iteration. Therefore, the standard implementation of bottom-up evaluation in real systems such as~\cite{so16,bddbddb05} is \emph{semi-na\"ive}. Semi-na\"ive evaluation contains two main optimizations over na\"ive bottom-up evaluation:
\begin{enumerate}
    \item \textbf{Precedence graph optimization:} the Datalog specification is split into \emph{strata}. Firstly, a precedence graph of relations is computed, then each strongly connected component of the precedence graph forms a stratum. Each stratum is evaluated in a bottom-up fashion as a separate fixpoint computation in order based on the topological order of SCCs. The input to a particular stratum is the output of the previous strata in the precedence graph.
    \item \textbf{New knowledge optimization:} within a single stratum, the evaluation is optimized in each iteration by considering the new tuples generated in the previous iteration. A new tuple is generated in the current iteration only if it directly depends on tuples generated in the previous iteration, therefore avoiding the recomputation of tuples already computed in prior iterations. We describe this process in further detail in Section~\ref{sec:bottomupimplementation}
\end{enumerate}

With these two optimizations, semi-na\"ive performs less repeated computations than the na\"ive algorithm, however, our method for generating proof trees must now be tailored to semi-na\"ive evaluation.

Another essential extension of Datalog is negation, and the standard semantics for negated Datalog is \emph{stratified negation}~\cite{fd95}. A negated predicate is denoted with a $!$ symbol, for example $!vpt(Var, Obj)$ denotes the negation of $vpt(Var, Obj)$. Semantically, a negated predicate evaluates to true if no matching tuples exist in the instance. With stratified negation semantics, a negated predicate is only allowed if the contained variables exist in positive predicates elsewhere in the body of the rule (a condition also known as \emph{groundedness}), and if the corresponding relation does not appear in a cycle in the precedence graph. During evaluation, the stratification of the precedence graph is carried out in a way such that the negated relations can be treated as input into a stratum, and a negated predicate is treated as a constraint, which holds if no corresponding tuple exists in the input instance.

\subsection{Provenance Evaluation Strategy} \label{sec:proveval}
These standard bottom-up evaluation semantics are extended to compute a minimal height proof tree for each tuple. Our extended semantics store \emph{proof annotations} alongside the original tuples. In particular, for each tuple, the annotations are the \emph{height} of the minimal height proof tree, and a number denoting the \emph{rule} which generated the tuple. By using this extra information, we can efficiently generate minimal height proof trees to answer provenance queries (see Section~\ref{sec:guidedtopdown}).

In the context of semi-na\"ive evaluation, and in particular the precedence graph optimization, we describe the provenance evaluation strategy here for a single fixpoint computation (i.e., a single stratum). The resulting correctness properties translate directly to the evaluation of the full Datalog specification since correctness holds for every stratum in the evaluation.

The rule number annotation is easily computed during bottom-up evaluation. With bottom-up evaluation, a new tuple $\tuple$ is generated if there is a rule
$
    \dlrule_k: \predicate \dlimpl \predicate[1], \ldots, \predicate[n], \constraints(X_1, \ldots, X_n)
$
and a set of tuples $\tuple[1], \ldots, \tuple[n]$ such that
$
    \tuple \dlimpl \tuple[1], \ldots, \tuple[n]
$
forms a valid instantiation of the above rule. If this is the case, we say that the rule firing of $\dlrule_k$ generates $\tuple$, and thus the identifier $\dlrule \brackets{\tuple} = k$ is stored as the rule number annotation for $\tuple$. In this way, for each tuple, we track which rule is fired to generate that tuple.

However, the height annotations are more involved and relate closely to the semantics of bottom-up evaluation. Thus, we must develop a formalism for the height annotations, to ensure that it correctly computes the height of the minimal height proof tree for each tuple. To formalize tuples with height annotations, we define a \emph{provenance lattice} as our domain of computation, which extends the standard subset lattice with proof annotations. An element of the provenance lattice is a provenance instance.

\begin{definition}[Provenance Instance]
    A \emph{provenance instance} is an instance of tuples $\instance$ along with a function
    \begin{align*}
        \level : \instance \rightarrow \mathbb{N}
    \end{align*}

    which provides a height annotation of each tuple in the instance. We denote a provenance instance to be the pair $\brackets{\instance, \level}$.
\end{definition}

The aim of these height annotations is to connect a tuple to its proof tree, as depicted in Figure~\ref{tupleandannotation}. The middle value is a tuple along with its height annotation, which is an example of an augmented tuple in a provenance instance. The corresponding proof tree on the right has height matching this annotation.
\begin{figure*}
    \captionsetup[subfigure]{justification=centering}
    \begin{subfigure}[b]{0.14\textwidth}
        \begin{align*}
            vpt(b, l_1)
        \end{align*}
    \end{subfigure}
    \begin{subfigure}[b]{0.05\textwidth}
        \begin{align*}
            \leftrightarrow
        \end{align*}
    \end{subfigure}
    \begin{subfigure}[b]{0.18\textwidth}
        \begin{align*}
            (vpt(b, l_1), 2)
        \end{align*}
    \end{subfigure}
    \begin{subfigure}[b]{0.05\textwidth}
        \begin{align*}
            \leftrightarrow
        \end{align*}
    \end{subfigure}
    \begin{subfigure}[b]{0.38\textwidth}
        \begin{prooftree}
            \AxiomC{$assign(b, a)$}
            \AxiomC{$new(a, l_1)$}
            \RightLabel{$r_1$}
            \UnaryInfC{$vpt(a, l_1)$}

            \RightLabel{$r_2$}
            \BinaryInfC{$vpt(b, l_1)$}
        \end{prooftree}
    \end{subfigure}
    \begin{subfigure}[b]{0.05\textwidth}
        \begin{tikzpicture}
            \coordinate (a) at (0, 0);
            \coordinate (b) at (0, 1.3);

            \draw[|-|] (a) to node[right]{$2$} (b);
        \end{tikzpicture}
    \end{subfigure}
    \caption{Connecting a tuple to a proof tree via a height annotation\label{tupleandannotation}}
\end{figure*}

Similar to the subset lattice of standard bottom-up evaluation, the domain of provenance evaluation should also form a lattice, in our case, based on the subset lattice of standard bottom-up evaluation, but with elements being provenance instances rather than instances. We denote this to be the \emph{provenance lattice} $\mathcal{L}$, where elements are provenance instances. The ordering $\sqsubseteq$ of elements in the lattice is defined by:
\begin{align*}
    \brackets{\instance_1, \level_1} \sqsubseteq \brackets{\instance_2, \level_2} \iff \instance_1 \subseteq \instance_2 \text{ and } \forall \tuple \in \instance_1: \level_1(\tuple) \geq \level_2(\tuple)
\end{align*}

Intuitively, this ordering specifies that an augmented instance $\brackets{\instance_1, \level_1}$ is `less than' another augmented instance $\brackets{\instance_2, \level_2}$ if all tuples in $\instance_1$ also appear in $\instance_2$, with larger or equal height annotation. Therefore, moving `up' the lattice towards the top element results in augmented instances with more tuples and smaller height annotations. This property guarantees the minimality of these height annotations since a bottom-up Datalog evaluation is equivalent to applying a monotone function to move `up' a lattice.

The property that $\sqsubseteq$ a valid partial order is essential to demonstrate that standard properties of Datalog evaluation hold.

\begin{lemma}
    $\sqsubseteq$ is a partial order
\end{lemma}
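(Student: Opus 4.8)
The plan is to verify directly that the relation $\sqsubseteq$ satisfies the three defining properties of a partial order: reflexivity, antisymmetry, and transitivity. The definition of $\brackets{\instance_1, \level_1} \sqsubseteq \brackets{\instance_2, \level_2}$ is a conjunction of two conditions, a set-inclusion condition $\instance_1 \subseteq \instance_2$ and a pointwise inequality $\forall \tuple \in \instance_1: \level_1(\tuple) \geq \level_2(\tuple)$. Since $\subseteq$ is already a partial order on sets and $\geq$ is a total order on $\mathbb{N}$, the strategy is to reduce each of the three properties for $\sqsubseteq$ to the corresponding property of its two constituent relations and to combine them. I expect each case to be a short, essentially mechanical argument.

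First I would prove reflexivity: for any provenance instance $\brackets{\instance, \level}$ we have $\instance \subseteq \instance$ and, for every $\tuple \in \instance$, $\level(\tuple) \geq \level(\tuple)$, so $\brackets{\instance, \level} \sqsubseteq \brackets{\instance, \level}$. Next I would prove antisymmetry: suppose $\brackets{\instance_1, \level_1} \sqsubseteq \brackets{\instance_2, \level_2}$ and $\brackets{\instance_2, \level_2} \sqsubseteq \brackets{\instance_1, \level_1}$. From the two inclusions $\instance_1 \subseteq \instance_2$ and $\instance_2 \subseteq \instance_1$ I get $\instance_1 = \instance_2$ by antisymmetry of $\subseteq$. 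For the annotation functions, for each $\tuple$ in this common instance the two height conditions give $\level_1(\tuple) \geq \level_2(\tuple)$ and $\level_2(\tuple) \geq \level_1(\tuple)$, hence $\level_1(\tuple) = \level_2(\tuple)$ by antisymmetry of $\geq$ on $\mathbb{N}$; since the two instances are equal, the functions agree on their common domain and thus $\level_1 = \level_2$, so the two provenance instances are equal. Finally, for transitivity, suppose $\brackets{\instance_1, \level_1} \sqsubseteq \brackets{\instance_2, \level_2}$ and $\brackets{\instance_2, \level_2} \sqsubseteq \brackets{\instance_3, \level_3}$. Transitivity of $\subseteq$ gives $\instance_1 \subseteq \instance_3$, and for each $\tuple \in \instance_1$ I would chain the inequalities: $\tuple \in \instance_1 \subseteq \instance_2$ ensures $\tuple \in \instance_2$ so that $\level_2(\tuple)$ is defined, and then $\level_1(\tuple) \geq \level_2(\tuple) \geq \level_3(\tuple)$ by transitivity of $\geq$, giving $\level_1(\tuple) \geq \level_3(\tuple)$.

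The only point requiring any care — and the closest thing to an obstacle — is keeping track of domains of the annotation functions. Because $\level_i$ is defined only on $\instance_i$, I must make sure that whenever I invoke an inequality such as $\level_2(\tuple) \geq \level_3(\tuple)$, the tuple $\tuple$ actually lies in the domain $\instance_2$ of $\level_2$; this is exactly what the accompanying set-inclusion condition supplies at each step, so the inclusions are doing double duty. Apart from this bookkeeping, the argument is routine and follows entirely from the fact that $\subseteq$ on $\setoftuples$-instances and $\geq$ on $\mathbb{N}$ are themselves partial (indeed total, in the case of $\geq$) orders.
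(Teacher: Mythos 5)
Your proof is correct and follows essentially the same route as the paper's: a direct verification of reflexivity, antisymmetry, and transitivity, reducing each to the corresponding property of $\subseteq$ on instances and $\geq$ on $\mathbb{N}$. Your extra care about the domains of the annotation functions (using $\instance_1 \subseteq \instance_2$ to justify that $\level_2(\tuple)$ is defined in the transitivity step) is the same observation the paper makes implicitly, just stated more explicitly.
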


\begin{proof}
    We show that each of the three properties of a partial order hold.
    \begin{itemize}
        \item Reflexivity: $\brackets{\instance, \level} \sqsubseteq \brackets{\instance, \level}$ since $\instance \subseteq \instance$ and $\forall \tuple \in \instance: \level(\tuple) = \level(\tuple)$

        \item Anti-symmetry: If $\brackets{\instance_1, \level_1} \sqsubseteq \brackets{\instance_2, \level_2}$ and $\brackets{\instance_2, \level_2} \sqsubseteq \brackets{\instance_1, \level_1}$, then $\instance_1 \subseteq \instance_2$ and $\instance_2 \subseteq \instance_1$ (so $\instance_1 = \instance_2$), and $\forall \tuple \in \instance_1: \level_1(\tuple) \geq \level_2(\tuple)$ and $\forall \tuple \in \instance_2: \level_2(\tuple) \geq \level_1(\tuple)$. Since $\instance_1 = \instance_2$, we conclude that $\forall \tuple \in \instance_1: \level_1(\tuple) = \level_2(\tuple)$.

        Thus, $\brackets{\instance_1, \level_1} = \brackets{\instance_2, \level_2}$.

    \item Transitivity: If $\brackets{\instance_1, \level_1} \sqsubseteq \brackets{\instance_2, \level_2}$ and $\brackets{\instance_2, \level_2} \sqsubseteq \brackets{\instance_3, \level_3}$, then we have $\instance_1 \subseteq \instance_2 \subseteq \instance_3$. Also, we have $\forall \tuple \in \instance_1: \level_1(\tuple) \geq \level_2(\tuple)$, and $\forall \tuple \in \instance_2: \level_2(\tuple) \geq \level_3(\tuple)$. Since $\instance_1 \subseteq \instance_2$, we have $\forall \tuple \in \instance_1: \level_1(\tuple) \geq \level_2(\tuple) \geq \level_3(\tuple)$.

        Therefore, we have $\instance_1 \subseteq \instance_3$ and $\forall \tuple \in \instance_1: \level_1(\tuple) \geq \level_3(\tuple)$, and thus $\brackets{\instance_1, \level_1} \sqsubseteq \brackets{\instance_3, \level_3}$.
    \end{itemize}

    Thus, this demonstrates that $\sqsubseteq$ is a partial order.
\end{proof}

In a similar fashion to the immediate consequence operator $\immediateconsequence$ operating on the subset lattice of Datalog instances, provenance evaluation is achieved with a consequence operator $\treeconsequence$ operating on the provenance lattice. The result of evaluation is reached when $\treeconsequence$ reaches a fixpoint, i.e., when $\treeconsequence\brackets{\brackets{\instance, \level}} = \brackets{\instance, \level}$. The main property $\treeconsequence$ is that once a fixpoint has been reached, the proof tree height annotations are \emph{minimal}, and they correspond to the heights of the smallest height proof trees.

The consequence operator $\treeconsequence$ is defined in terms of the $\immediateconsequence$ operator:
\begin{definition}[\emph{Consequence} operator]
    The consequence operator, $\treeconsequence$, generates a new provenance instance:
    \begin{align*}
        \treeconsequence \brackets{\brackets{\instance, \level}} = \brackets{\immediateconsequence \brackets{\instance}, \level'}
    \end{align*}
    where $\level'$ is defined as follows. For any tuple $\tuple \in \immediateconsequence \brackets{\instance}$, let
    \begin{align*}
        G_{\tuple} = \braces{\brackets{\tuple[1], \ldots, \tuple[n]} \mid \tuple \dlimpl \tuple[1], \ldots, \tuple[n] \text{ is a valid rule instantiation with each } \tuple[i] \in \immediateconsequence \brackets{\instance}}
    \end{align*}
    be the set of all configurations of rule bodies generating $\tuple$. Note this may be empty in the case of EDB tuples. Then,
    \begin{align*}
        \level' \brackets{\tuple} = \begin{cases}
            \level \brackets{\tuple} & \text{if } G_{\tuple} = \emptyset \\
            \min_{g \in G_{\tuple}} \braces{\max_{\tuple[i] \in g} \braces{\level\brackets{\tuple[i]}} + 1} & \text{otherwise}
        \end{cases}
    \end{align*}
\end{definition}

The generation of new tuples behaves in the same way as $\immediateconsequence$. To illustrate the height annotations, consider the rule instantiation $vpt(b, l_1) \dlimpl (assign(b, a), 0), (vpt(a, l_1), 1)$, with height annotations written alongside body tuples for convenience. From this rule instantiation, we would generate the tuple $vpt(b, l_1)$ with height annotation $\max \brackets{0, 1} + 1 = 2$. However, the instantiation $vpt(b, l_1) \dlimpl (load(b, c, f), 0), (store(c, f, a), 0), (vpt(a, l_1), 1), (alias(c, c), 2)$ would also generate $vpt(b, l_1)$, but with height annotation $\max \brackets{0, 0, 1, 2} + 1 = 3$. The resulting instance after applying $\treeconsequence$ will contain only the smaller annotation, and thus the resulting provenance tuple is $(vpt(b, l_1), 2)$.

Also, note that this semantics allow for the \emph{update} of the height annotation for a tuple $\tuple \in \instance$. If $\treeconsequence \brackets{\instance, \level} = \brackets{\immediateconsequence \brackets{\instance}, \level'}$ results in $\level'\brackets{\tuple} < \level \brackets{\tuple}$, then we say that the height annotation of $\tuple$ has been updated. An update may happen if $\treeconsequence$ generates new tuples which form a valid configuration of a rule body generating $\tuple$, with lower height annotations than a previous derivation.

We illustrate this definition of provenance evaluation strategy using the running example. We denote $(\tuple, \level)$ to be a tuple $\tuple$ with height annotation $\level$. Assume, for the sake of illustration, that the input instance is computed from a previous fixpoint, and thus has different height annotations, which may be the case with a single fixpoint in semi-na\"ive evaluation. Note that for the full Datalog specification, all input tuples will have a height annotation of 0. Figure \ref{fig:fixpointexample} shows the derived relations under the fixpoint computation with the provenance evaluation strategy. Importantly, in iteration 3, the height annotation for $vpt(b, l_1)$ is updated as a result of a new derivation using \verb|b = c.f; c.f = a;| is computed, with lower height annotation than the previous derivation which used \verb|b = a;|. This demonstrates that annotations of tuples may be updated after they are initially computed, which is essential to maintain minimality.

\begin{figure}[h]
    \textbf{Input}:
    \begin{align*}
        \{&(new(a, l_1), 0), (assign(b, a), 6), (new(c, l_3), 0), (new(d, l_4), 0),\\
        &(store(c, f, a), 0), (load(e, d, f), 0), (load(b, c, f), 0), (assign(a, b), 0)\}
    \end{align*}

    \textbf{Fixpoint iterations}:
    \begin{align*}
        i_0&: \emptyset \\
        i_1&: \braces{(vpt(a, l_1), 1), (vpt(c, l_3), 1), (vpt(d, l_4), 1)} \\
        i_2&: \braces{(vpt(a, l_1), 1), (vpt(c, l_3), 1), (vpt(d, l_4), 1), (vpt(b, l_1), 7)} \\
        i_3&: \braces{(vpt(a, l_1), 1), (vpt(c, l_3), 1), (vpt(d, l_4), 1), (vpt(b, l_1), 3)} \\
        i_4&: \braces{(vpt(a, l_1), 1), (vpt(c, l_3), 1), (vpt(d, l_4), 1), (vpt(b, l_1), 3)}
    \end{align*}
    \caption{IDB relation $vpt$ in each iteration of the fixpoint computation for the example Datalog specification \label{fig:fixpointexample}}
\end{figure}

It remains to be shown that the provenance evaluation strategy is correct, i.e., that $\treeconsequence$ terminates and results in the same set of tuples as $\immediateconsequence$. Additionally, we must show that the height annotations resulting from provenance evaluation strategy is minimal.

\begin{lemma}
    $\treeconsequence$ computes the same tuples as $\immediateconsequence$ at fixpoint, i.e.
    \begin{enumerate}
        \item $\exists k \text{ s.t. } \treeconsequence \brackets{\treeconsequence^k \brackets{\brackets{\instance, \level}}} = \treeconsequence^k \brackets{\instance, \level}$, and
        \item $\treeconsequence^k \brackets{\instance, \level} = \brackets{\immediateconsequence^k \brackets{\instance}, \level^k}$ for some level annotation function $\level^k$
    \end{enumerate}
\end{lemma}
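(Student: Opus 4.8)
The plan is to split the statement along its two clauses and prove them in the order that respects their dependency: clause (2), that the tuple coordinate of every iterate of $\treeconsequence$ coincides with the corresponding iterate of $\immediateconsequence$, is the structural backbone and I would prove it first, since clause (1) (termination) rests on it.

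For clause (2) I would induct on $k$. The base case is immediate: $\treeconsequence^{0}\brackets{\brackets{\instance,\level}} = \brackets{\instance,\level}$ and $\immediateconsequence^{0}\brackets{\instance} = \instance$, so set $\level^{0} = \level$. For the step, the key observation is that by the definition of $\treeconsequence$ its output on the instance coordinate is always exactly $\immediateconsequence$ applied to the input instance coordinate, irrespective of the annotation. Hence if $\treeconsequence^{k-1}\brackets{\brackets{\instance,\level}} = \brackets{\immediateconsequence^{k-1}\brackets{\instance}, \level^{k-1}}$, one more application yields instance coordinate $\immediateconsequence\brackets{\immediateconsequence^{k-1}\brackets{\instance}} = \immediateconsequence^{k}\brackets{\instance}$, and we name the resulting annotation $\level^{k}$. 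This clause never inspects the annotations and is essentially bookkeeping.

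For clause (1) the strategy is to show that the iterates form a $\sqsubseteq$-increasing chain and that $\sqsubseteq$ (a partial order by Lemma~1) satisfies the ascending chain condition on the reachable portion of $\mathcal{L}$; stabilization of the chain then furnishes the required $k$. Granting that $\treeconsequence$ is inflationary, i.e. $\brackets{\instance_{k},\level_{k}} \sqsubseteq \treeconsequence\brackets{\brackets{\instance_{k},\level_{k}}}$ (addressed in the next paragraph), I would establish the ACC in two layers. First, by clause (2) the instance coordinate of the $k$-th iterate is $\immediateconsequence^{k}\brackets{\instance}$; since $\immediateconsequence$ is inflationary and monotone and the Herbrand base of a function-free program is finite, this $\subseteq$-chain stabilizes at some $\instance^{*} = \immediateconsequence^{k_{0}}\brackets{\instance}$. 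Second, once the instance is frozen at $\instance^{*}$, the order $\sqsubseteq$ reduces to the reverse pointwise order on the height function over the finitely many tuples of $\instance^{*}$, and the nonnegative integer $\sum_{\tuple \in \instance^{*}} \level\brackets{\tuple}$ strictly decreases along every strict step, so only finitely many strict steps remain. Together these rule out any infinite strictly increasing chain.

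The hard part will be verifying that $\treeconsequence$ is genuinely inflationary, which amounts to the inequality $\level_{k}\brackets{\tuple} \geq \level_{k+1}\brackets{\tuple}$ for $\tuple \in \instance_{k}$ (the instance inclusion being built into $\immediateconsequence$). I would prove this by maintaining the invariant that each annotation is \emph{witnessed} by a currently valid configuration: for every derivable $\tuple \in \instance_{k}$ there is some $g \in G_{\tuple}$, whose body tuples lie in the instance on which the incoming annotation is defined, with $\max_{\tuple[i] \in g}\level_{k}\brackets{\tuple[i]} + 1 \leq \level_{k}\brackets{\tuple}$. Because the instance only grows, this witnessing $g$ remains a valid configuration at step $k+1$, so the minimum defining $\level_{k+1}\brackets{\tuple}$ is taken over a superset of configurations; combined with the inductive hypothesis that the heights of the body tuples (which belong to an earlier instance) have not risen, this gives $\level_{k+1}\brackets{\tuple} \leq \max_{\tuple[i] \in g}\level_{k}\brackets{\tuple[i]} + 1 \leq \level_{k}\brackets{\tuple}$. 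The invariant is preserved because a freshly derived tuple is assigned precisely the $\min$-over-configurations value, witnessed by its own argmin, and an input tuple with $G_{\tuple}=\emptyset$ keeps its height; the base case needs only that the initial annotation is consistent, which holds since input facts carry height $0$ (or are frozen stratum inputs). The delicate case to dispatch is a tuple that is simultaneously an input fact and a rule head, where the $\min$ could exceed the stored height; I would resolve it via the standard assumption that EDB relations do not occur in rule heads, or by folding the base height into $G_{\tuple}$, so the witnessing invariant is never violated. This weak monotone descent of the heights is exactly the design intent behind the provenance lattice, so with the invariant in hand the remaining termination argument is routine.
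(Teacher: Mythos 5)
Your proposal is correct and follows essentially the same route as the paper's proof: clause (2) holds because the instance coordinate of $\treeconsequence$ is by definition just $\immediateconsequence$, and clause (1) follows because, once the tuple set stabilizes, every stored height is witnessed by some valid configuration, so heights can only weakly decrease and are bounded below by $0$. The only substantive difference is that you make explicit two points the paper leaves implicit --- the inductive witnessing invariant needed to justify that body heights never rise (the paper merely asserts that $\level^{l}\brackets{\tuple}$ ``must result from such a configuration''), and the edge case of an input fact whose relation also occurs in a rule head, where $G_{\tuple} \neq \emptyset$ forces the stored height upward (e.g.\ under a tautological rule, breaking termination) --- which you correctly patch via the stratification assumption that stratum inputs do not occur in heads.
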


\begin{proof}
    By definition, $\treeconsequence$ generates tuples in the same fashion as $\immediateconsequence$. Since $\immediateconsequence$ always reaches a fixpoint, say after $l$ iterations, i.e., $\immediateconsequence\brackets{\immediateconsequence^l \brackets{\instance}} = \immediateconsequence^l \brackets{\instance}$, we have
    \begin{align*}
        \treeconsequence^l \brackets{\brackets{\instance, \level}} = \brackets{\immediateconsequence^l \brackets{\instance}, \level^l}
    \end{align*}
    Any further applications of $\treeconsequence$ do not change the set of tuples since $\immediateconsequence$ has already reached a fixpoint. Thus, after $l$ iterations, $\treeconsequence$ computes the same tuples as $\immediateconsequence$. 

    If there exists a $k \geq l$ such that $\treeconsequence$ reaches fixpoint after $k$ iterations, then the theorem is proved. Consider applying $\treeconsequence$ to $\brackets{\immediateconsequence^l \brackets{\instance}, \level^l}$. The set of tuples will not change. For any tuple $\tuple \in \immediateconsequence^l \brackets{\instance}$, the height annotation can only decrease as a result of applying $\treeconsequence$ since $\treeconsequence$ takes the minimum height over all rule configurations generating $\tuple$ and $\level^l \brackets{\tuple}$ also must result from such a configuration.

    The height annotation is bounded from below by $0$ since EDB tuples have non-negative annotations, and each subsequently generated tuple has increasing annotation. Therefore, applying $\treeconsequence$ monotonically decreases the height annotation of $\tuple$, which is bounded from below, so eventually, a fixpoint must be reached. Since this holds for all tuples in $\immediateconsequence^l \brackets{\instance}$, $\treeconsequence$ must reach a fixpoint after $k \geq l$ iterations.
\end{proof}

We have established that the provenance evaluation strategy terminates and computes the same set of tuples as standard bottom-up evaluation. It remains to be shown that the proof height annotations are minimal, i.e., that they reflect the real height of the minimal height proof tree for each tuple, and also that they correspond to real proof trees. The property of minimal height annotations is the major result of this section since it demonstrates that our method generates proof trees of minimal height.

\begin{theorem} \label{thm:minimalheightptrees}
    Let $\treeconsequence^k \brackets{\brackets{\instance, \level}} = \brackets{\immediateconsequence^k \brackets{\instance}, \level^{k}}$ be the resulting instance at fixpoint of $\treeconsequence$. Then, for any arbitrary tuple $\tuple \in \immediateconsequence^k \brackets{\instance}$, 
    \begin{enumerate}
        \item \label{thm:minimalheightptrees1} there does not exist any sequence of tuples $\tuple[1], \ldots, \tuple[n]$ such that
            $
                \tuple \dlimpl \tuple[1], \ldots, \tuple[n]
            $
            is a valid instantiation of a rule in $\program$ with each $\tuple[i] \in \immediateconsequence^k \brackets{\instance}$ and $\level \brackets{\tuple} > \max \braces{\level \brackets{\tuple[1]}, \ldots, \level \brackets{\tuple[n]}} + 1$, and
        \item \label{thm:minimalheightptrees2} there is a valid proof tree for $\tuple$ with height $\level^{k} \brackets{\tuple}$
    \end{enumerate}
\end{theorem}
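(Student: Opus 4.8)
The plan is to treat the two parts separately, since part~\ref{thm:minimalheightptrees1} is essentially the fixpoint condition restated, whereas part~\ref{thm:minimalheightptrees2} requires the inductive construction of an actual proof tree. Throughout I read the $\level$ of the statement as the fixpoint annotation $\level^k$.

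For part~\ref{thm:minimalheightptrees1} I would simply unfold the definition of the \emph{Consequence} operator at the fixpoint. By the preceding lemma $\treeconsequence$ has reached a fixpoint, so $\treeconsequence\brackets{\brackets{\immediateconsequence^k\brackets{\instance}, \level^k}} = \brackets{\immediateconsequence^k\brackets{\instance}, \level^k}$, which forces $\level^k$ to satisfy its own defining recurrence: for every tuple $\tuple$ with $G_{\tuple} \neq \emptyset$ we have $\level^k\brackets{\tuple} = \min_{g \in G_{\tuple}} \braces{\max_{\tuple[i] \in g} \braces{\level^k\brackets{\tuple[i]}} + 1}$. Because this is a minimum over all generating configurations, every individual configuration $\tuple[1], \ldots, \tuple[n] \in G_{\tuple}$ already satisfies $\level^k\brackets{\tuple} \leq \max\braces{\level^k\brackets{\tuple[1]}, \ldots, \level^k\brackets{\tuple[n]}} + 1$, which directly contradicts the strict inequality hypothesised in the statement. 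If instead $G_{\tuple} = \emptyset$ then there is no valid instantiation generating $\tuple$ at all, so the claim holds vacuously. This settles part~\ref{thm:minimalheightptrees1} with no further work.

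For part~\ref{thm:minimalheightptrees2} I would argue by strong induction on the value of the annotation $\level^k\brackets{\tuple}$, producing for each $\tuple$ a proof tree of height exactly $\level^k\brackets{\tuple}$. In the base case $G_{\tuple} = \emptyset$, so $\tuple$ is an input tuple to the fixpoint computation; for the full specification this is an $\inputinstance$ tuple with $\level^k\brackets{\tuple} = 0$, witnessed by the single-node proof tree of height $0$, and for a single stratum the required tree is supplied by the correctness established for the earlier strata. In the inductive step $G_{\tuple} \neq \emptyset$, and the fixpoint identity gives a \emph{minimizing} configuration $g^{*} = \brackets{\tuple[1], \ldots, \tuple[n]}$ with $\max_{i} \level^k\brackets{\tuple[i]} + 1 = \level^k\brackets{\tuple}$. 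The crucial consequence is that every body tuple satisfies $\level^k\brackets{\tuple[i]} \leq \level^k\brackets{\tuple} - 1 < \level^k\brackets{\tuple}$, so the induction hypothesis applies and yields proof trees $\ptree_{\tuple[i]}$ of height $\level^k\brackets{\tuple[i]}$. I would then assemble these into a proof tree for $\tuple$ by taking $\tuple$ as the root, attaching each $\ptree_{\tuple[i]}$ as a child, and associating the root with the rule $\dlrule$ whose instantiation is $\tuple \dlimpl \tuple[1], \ldots, \tuple[n]$; checking the four conditions of the Proof Tree definition is routine, and the resulting height is $\max_{i} \height\brackets{\ptree_{\tuple[i]}} + 1 = \max_{i} \level^k\brackets{\tuple[i]} + 1 = \level^k\brackets{\tuple}$, as required.

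The main obstacle is ensuring the induction in part~\ref{thm:minimalheightptrees2} is well-founded, i.e.\ that recursing into the body tuples strictly decreases the quantity on which we induct. This is precisely where the fixpoint property is indispensable: only because $\level^k$ satisfies the min--max identity can we select a configuration all of whose body annotations are at most $\level^k\brackets{\tuple} - 1$. An arbitrary, non-minimizing configuration in $G_{\tuple}$ could contain a body tuple with annotation $\geq \level^k\brackets{\tuple}$, which would break the descent and could yield a tree taller than the annotation. A secondary point to handle with care is the single-stratum base case, where input tuples carry nonzero annotations inherited from previous strata; here the argument leans on the remark that correctness holds per stratum, so those annotations are themselves realised by genuine proof trees. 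Once the minimizing configuration is in hand, the descent and the height bookkeeping are mechanical.
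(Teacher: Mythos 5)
Your proposal is correct and takes essentially the same route as the paper: part~1 exploits the fixpoint identity of $\treeconsequence$ (the paper phrases this as a contradiction---one further application of $\treeconsequence$ would trigger an update---while you unfold the min--max recurrence directly, which is the same argument), and part~2 is the same induction on the annotation value, assembling subtrees from a minimizing configuration. If anything, your handling of the base case is slightly more careful than the paper's, since you account for stratum-input tuples carrying nonzero annotations inherited from earlier strata, whereas the paper's proof only treats the $\level^k\brackets{\tuple} = 0$ EDB case.
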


\begin{proof}
    The proof for part \ref{thm:minimalheightptrees1} is by contradiction. Assume such a sequence of tuples $\tuple[1], \ldots, \tuple[n]$ exists. Consider applying $\treeconsequence$ to the instance.
    \begin{align*}
        \treeconsequence \brackets{\immediateconsequence^k \brackets{\instance}, \level^k} = \brackets{\immediateconsequence^k \brackets{\instance}, \level^{k+1}}
    \end{align*}

    with $\level^{k+1} \brackets{\tuple} = \min_{g \in G_{\tuple}} \braces{\max_{\tuple[i] \in g} \braces{\level^k \brackets{\tuple[i]}} + 1}$ by definition of $\treeconsequence$. The set of tuples does not change since we assume that a fixpoint of $\immediateconsequence$ has already been reached.

    Since the sequence $\tuple[1], \ldots, \tuple[n]$ is a valid rule body configuration generating $t$, it is an element of $G_{\tuple}$, and therefore is considered when updating the height annotation of $\tuple$. Since the height annotation resulting from this sequence is lower than $\level^k \brackets{\tuple}$, the update will happen, and thus a fixpoint has not yet been reached.

    Thus, we have a contradiction, and so such a sequence producing a lower height annotation cannot exist.

    The proof for part \ref{thm:minimalheightptrees2} is by induction on the height annotation of $\tuple$. Let $h = \level^k \brackets{\tuple}$ for simplicity.

    If $h = 0$, then $\tuple$ is in the EDB. In this case, the proof tree with a single node corresponding to $\tuple$ is a valid proof tree. Otherwise, for $h > 0$, assume the hypothesis is true for all tuples with height annotation less than $h$. By definition of $\treeconsequence$, there exists a sequence $\tuple \dlimpl \tuple[1], \ldots, \tuple[n]$ such that
    \begin{align*}
        h = \max \brackets{\level^k \brackets{\tuple[1]}, \ldots, \level^k \brackets{\tuple[n]}} + 1
    \end{align*}

    By the assumption, there are valid proof trees for each $\tuple[i]$ of height $\level^k \brackets{\tuple[i]}$. We can generate a proof tree as follows:
    \begin{prooftree}
        \AxiomC{$\ldots$}
        \UnaryInfC{$\tuple[1]$}
        \AxiomC{$\ldots$}
        \AxiomC{$\ldots$}
        \UnaryInfC{$\tuple[n]$}
        \TrinaryInfC{$\tuple$}
    \end{prooftree}
    where each $\ldots$ represents the subtree forming a valid proof tree for each $\tuple[i]$. This resulting proof tree has height
    \begin{align*}
        \max \brackets{\level^k \brackets{\tuple[1]}, \ldots, \level^k \brackets{\tuple[n]}} + 1
    \end{align*}
    which equals $h$. This forms a valid proof tree for $\tuple$ of height $\level^k \brackets{\tuple}$.
\end{proof}

We have shown the correctness and minimal height annotations of the provenance evaluation strategy for a single fixpoint computation. To evaluate a stratified Datalog specification in a semi-na\"ive fashion, each stratum is evaluated as a separate fixpoint using the provenance evaluation strategy. The correctness of the evaluation of a full Datalog specification follows from the correctness of each fixpoint evaluation.

\subsubsection{Complexity of Provenance Evaluation Strategy}
In this section, we discuss the complexity of the provenance evaluation strategy. We characterize this complexity by the number of rule firings during evaluation. With standard bottom-up evaluation, we say that a rule is fired if it generates a new tuple. Therefore, for each tuple generated by the Datalog specification, there is exactly 1 rule firing. However, with the provenance evaluation strategy, a rule is also fired if it results in an update for the height annotation of a tuple. Therefore, we consider the number of updates performed during evaluation of the specification as a characterization of the extra amount of work done by provenance evaluation compared to standard bottom-up evaluation.

\begin{theorem}
    An upper bound for the number of updates performed is $\mathcal{O} \brackets{n \times \max h}$, where $\max h$ denotes the maximum attained height annotation for any tuple during evaluation and $n$ the number of tuples generated by the specification.
\end{theorem}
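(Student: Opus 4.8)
The plan is to prove the bound by a straightforward amortized counting argument on the height annotations. The central observation I would isolate first is monotonicity: once a tuple $\tuple$ has been generated, each subsequent application of $\treeconsequence$ can only decrease its height annotation $\level\brackets{\tuple}$, and this annotation is bounded below by $0$. Both facts were already established in the proof that $\treeconsequence$ reaches a fixpoint, where applying $\treeconsequence$ was shown to take the minimum over all rule-body configurations generating $\tuple$, and hence to decrease annotations monotonically while never dropping below $0$ (EDB tuples have non-negative annotations, and every generated tuple has a strictly larger annotation than its children).

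With monotonicity in hand, I would quantify the cost of a single update. By definition, a tuple $\tuple$ is updated exactly when applying $\treeconsequence$ yields $\level'\brackets{\tuple} < \level\brackets{\tuple}$. Since every height annotation is a natural number, each update strictly decreases $\level\brackets{\tuple}$ by at least one integer unit, and by monotonicity the annotation never increases again afterwards.

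Next I would bound the number of updates per tuple. When $\tuple$ is first generated its annotation is at most $\max h$, by the definition of $\max h$ as the largest height annotation attained by any tuple during the entire evaluation, and its final annotation is at least $0$. Since each update consumes at least one integer unit of this range and annotations never grow back, $\tuple$ can be updated at most $\max h$ times. Summing over all $n$ tuples produced by the specification then gives a total update count of at most $n \times \max h = \mathcal{O}\brackets{n \times \max h}$, as required.

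This is essentially a routine counting argument, so I do not expect a substantive obstacle. The only points needing care are the two ingredients borrowed from the earlier fixpoint lemma — that each update \emph{strictly} decreases the annotation (so that repeated updates at the same height cannot occur) and that annotations are bounded below — both of which follow directly from the monotonicity-and-boundedness property of $\treeconsequence$ already proved above.
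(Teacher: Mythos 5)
Your proof is correct and follows essentially the same counting argument as the paper's own proof of the upper bound: each update strictly decreases an integer-valued annotation that starts at no more than $\max h$ and is bounded below by $0$, giving at most $\max h$ updates per tuple and $\mathcal{O}(n \times \max h)$ updates in total; if anything, you state the per-update unit decrease and the monotonicity ingredient more carefully than the paper does. The one difference is that the paper treats the proof obligation as two-fold and additionally exhibits a family of instances (a reachability example with a chain and a ``leg'' of $k$ nodes) on which $\Theta(k^2)$ updates actually occur, showing the bound is tight; your proposal omits this, which is acceptable for the statement as written, since it only claims an upper bound, but it means you do not establish the tightness claim that the paper folds into its proof.
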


\begin{proof}
    To prove this, we need to show two things: (1) that it is a true upper bound, and (2) that it is a tight bound.

    To prove (1), consider a tuple $\tuple$ attaining a height annotation of $\max h$. Its annotation may only be updated if there is a valid derivation for $\tuple$ with a lower height. In the worst case, in each update, we reduce the annotation by $1$, and thus we must perform $\max h$ updates to $\tuple$. Considering all tuples produced by the specification, we may update all tuples in this way in the worst case, and therefore, we have $\mathcal{O} \brackets{n \times \max h}$ updates.

    To prove (2), we show an example attaining the upper bound, in Figure~\ref{fig:upper_bound}. In this example, the maximum 
    height annotation is $2k$, and the tuple $reach(a, e)$ will be updated $k$ times as new derivations are computed using nodes in the bottom chain. Furthermore, each tuple $reach(a, x)$ corresponding to nodes $x$ in the `leg' must be updated $\mathcal{O} \brackets{k}$ 
    times as the tuple $reach(a, e)$ is updated. Since there are $k$ nodes in the leg, each of which is updated $\mathcal{O} \brackets{k}$ times, we have in total $\mathcal{O} \brackets{k^2}$ updates, which coincides with the upper bound. Therefore, this upper bound is tight.
\end{proof}

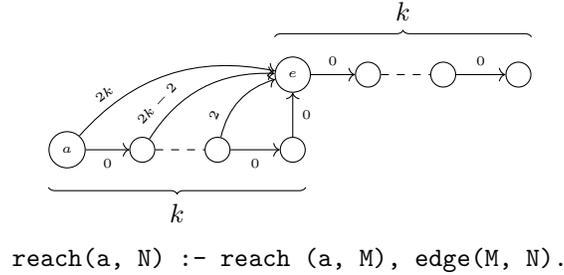
\begin{figure}[h]
\centering
    \begin{tikzpicture}
        \node[draw, circle] (a) at (0, 0) {\tiny $a$};
        \node[draw, circle] (b) at (1, 0) {};
        \node[draw, circle] (c) at (2, 0) {};
        \node[draw, circle] (d) at (3, 0) {};
        \node[draw, circle] (e) at (3, 1) {\tiny $e$};
        \node[draw, circle] (f) at (4, 1) {};
        \node[draw, circle] (g) at (5, 1) {};
        \node[draw, circle] (h) at (6, 1) {};

        \path[draw, ->] (a) to node[below]{\tiny $0$} (b);
        \path[draw, dashed] (b) to (c);
        \path[draw, ->] (c) to node[below]{\tiny $0$} (d);

        \path[draw, ->] (e) to node[above]{\tiny $0$} (f);
        \path[draw, dashed] (f) to (g);
        \path[draw, ->] (g) to node[above]{\tiny $0$} (h);

        \path[draw, ->] (a) to[bend left] node[pos=0.2, above, sloped]{\tiny $2k$} (e);
        \path[draw, ->] (b) to[bend left] node[pos=0.2, above, sloped]{\tiny $2k-2$} (e);
        \path[draw, ->] (c) to[bend left] node[pos=0.2, above, sloped]{\tiny $2$} (e);
        \path[draw, ->] (d) to node[right]{\tiny $0$} (e);

        \draw[decoration={brace, mirror, raise=0.5cm}, decorate] (a.west) to node[pos=0.5, below, yshift=-0.6cm]{$k$} (d.east);
        \draw[decoration={brace, raise=0.5cm}, decorate] (e.west) to node[pos=0.5, above, yshift=0.6cm]{$k$} (h.east);
    \end{tikzpicture}

\begin{varwidth}{\linewidth}
    \begin{verbatim}

reach(a, N) :- reach (a, M), edge(M, N).
    \end{verbatim}
\end{varwidth}
    \caption{Example Datalog specification demonstrating the upper bound is tight. The label on each edge denotes the height annotation of that tuple. \label{fig:upper_bound}}
\end{figure}

We also note that $\max h$ cannot exceed $n$ since in each iteration of $\treeconsequence$ where a new tuple is generated, the height annotation of that tuple cannot exceed the maximum height annotation in the previous iteration, plus 1. If no more tuples are generated, then the height annotation for any tuple may not increase. Therefore, by generating a new tuple, we increase $\max h$ by at most 1, and therefore this value is at most the total number of tuples generated.

Therefore, in the worst case, the provenance evaluation strategy may have to do a quadratic amount of extra work compared to standard bottom-up evaluation. However, as real-world examples (see Section~\ref{sec:experiments}) show, such instances rarely occur, and scalability is maintained in most real-world cases.

\subsection{Proof Tree Construction by Provenance Queries}\label{sec:guidedtopdown}
Given a provenance instance $\brackets{\instance, \level}$ computed by the provenance evaluation strategy, and a tuple $\tuple \in \instance$, the aim is to construct one level of a minimal height proof tree for $\tuple$. We utilize the height annotations $\level$ and rule number annotations that are stored alongside the instance during bottom-up evaluation. We use a top-down approach for proof tree construction, starting from a query tuple and recursively finding tuples that form a valid instantiation of a rule generating the query tuple. Denote $\level \brackets{\tuple}$ to be the height annotation, and $\dlrule \brackets{\tuple}$ to be the rule corresponding with the rule annotation for $\tuple$.

The result of this search would be a sequence $\tuple[1], \ldots, \tuple[n]$ such that $\tuple \dlimpl \tuple[1], \ldots, \tuple[n]$ is a valid instantiation of $\dlrule \brackets{\tuple}$ leading to a minimal height proof tree. A pre-requisite is that the provenance instance $\brackets{\instance, \level}$ is the result of bottom-up evaluation, and since all possible tuples are computed during this evaluation, we know that each $\tuple[1], \ldots, \tuple[n]$ exists in $\instance$. Thus, this problem would be solved by searching for tuples in the already computed instance $\instance$.

However, we must constrain this search such that the result is part of a proof tree of minimal height since there may be multiple valid configurations for the body of $\dlrule \brackets{\tuple}$, and some configurations may not lead to minimal height proof trees. These constraints result from the annotations from bottom-up evaluation. From Theorem~\ref{thm:minimalheightptrees}, there exists a configuration for the body which leads to a minimal height annotation for the head, and the height annotation for tuple $\tuple$ is generated as
\begin{align*}
    \level \brackets{\tuple} = \max \brackets{\level \brackets{\tuple[1]}, \ldots, \level \brackets{\tuple[n]}} + 1
\end{align*}
by the consequence operator. Therefore, a configuration leading to the minimal height proof tree is $\tuple[1], \ldots, \tuple[n]$ where $\level \brackets{\tuple[i]} < \level \brackets{\tuple}$ for each $\tuple[i]$. Note that there may be multiple configurations leading to a proof tree of minimal height, and any of these configurations is a valid result for the problem.

The problem can be phrased as the following goal search. Given a tuple $\tuple$, and a rule $\dlrule \brackets{\tuple}: \predicate \dlimpl \predicate[1], $ $ \ldots, \predicate[n], \constraints(X_1, \ldots, X_n)$ generating $\tuple$, we want to find tuples $\tuple[1], \ldots, \tuple[n] \in \instance$ such that $\tuple \dlimpl \tuple[1], \ldots, \tuple[n]$ is a valid instantiation of $\dlrule \brackets{\tuple}$, with proof annotations of each $\tuple[i]$ satisfying the former constraints.
\begin{align*}
    ? \dlimpl \predicate[1], \ldots, \predicate[n], \constraints(X_1, \ldots, X_n), \textit{matches}(\tuple, X_1, \ldots, X_n), \level(\predicate[1]) < \level(\tuple) , \ldots, \level(\predicate[n]) < \level(\tuple)
\end{align*}

The condition $matches(\tuple, X_1, \ldots, X_n)$ denotes that for a result $\tuple[1], \ldots, \tuple[n]$, the variable mapping from each $X_i$ to $\tuple[i]$ is consistent with the variable mapping from $X$ to $\tuple$. This is related to the problem of unification in Prolog, and in our context is crucial to ensure that the resulting configuration forms a valid instantiation of $\dlrule$.

\textbf{Example:} We illustrate this construction using the running example. The query is for the tuple $alias(a, b)$. From the initial bottom-up evaluation, the height annotation is $\level\brackets{alias(a, b)} = 4$, and the generating rule is $r_4: alias(Var1, Var2) \dlimpl vpt(Var1, Obj), vpt(Var2, Obj)$.

The search is for tuples forming a configuration for the body of $r_4$, $vpt(Var1, Obj), vpt(Var2, Obj)$ satisfying the constraints
\begin{align*}
    Var1 = a, \\
    Var2 = b, \\
    \level\brackets{vpt(Var1, Obj)} < 4, \\
    \level\brackets{vpt(Var2, Obj)} < 4
\end{align*}

In this example, the first two constraints corresponds with $\textit{matches}(\tuple, X_1, \ldots, X_n)$, and the last two constraints enforce the conditions for proof height annotations. Therefore, the goal search is
\begin{align*}
    ? \dlimpl &vpt(Var1, Obj), vpt(Var2, Obj), Var1 \neq Var2, Var1 = a, Var2 = b,\\
    &\level\brackets{vpt(Var1, Obj)} < 4, \level\brackets{vpt(Var2, Obj)} < 4
\end{align*}

In this case, we find the tuples $vpt(a, l_1), vpt(b, l_1)$, which form the next level of the proof tree:
\begin{prooftree}
    \AxiomC{$vpt(a, l_1)$}
    \AxiomC{$vpt(b, l_1)$}
    \AxiomC{$a \neq b$}
    \RightLabel{$r_4$}
    \TrinaryInfC{$alias(a, b)$}
\end{prooftree}

This goal search can be seen as a special case of Datalog, denoted \textbf{0-IDB} Datalog. The properties of 0-IDB Datalog are:
\begin{itemize}
    \item There are no IDB relations, and every possible tuple is in EDB
    \item There is a single goal search query
    \item The evaluation terminates as soon as the first solution for the goal search is found
\end{itemize}
Our method for searching for body tuples fits into this framework. The IDB resulting from the provenance evaluation strategy 
forms the EDB of our 0-IDB program, and the search for matching body tuples is the single goal search query.

The property that the evaluation terminates as soon as the first solution is found is desirable for efficiency reasons. Our goal search 
contains constraints to ensure that \emph{any} result forms a valid configuration for a minimal height generator, and therefore the 
first solution found is sufficient to solve our problem.

The complexity of the goal search depends highly on the data structures used in the implementation. We assume a
fully (B-Tree) indexed nested loop joins. Therefore searching for a tuple for a rule with an $m$ nested join, 
requires $\mathcal{O}(\log^m n)$ time. Given a proof tree
height of $k$, we need $\mathcal{O}(k \log^m n) \equiv \mathcal{O}(\log^m n)$ to traverse a single branch.

\subsection{Provenance for Non-Existence of Tuples via User Interaction}
The provenance evaluation strategy of the previous section explains the existence of tuples 
in relations. However, the non-existence of tuples may also indicate faults in either the 
input relations and/or in the rules. 

Therefore, we extend our approach explaining on why a tuple \emph{cannot} be 
derived, i.e., if the user expects a tuple, but it does not appear in the IDB,  
the user may wish to investigate why the tuple is not produced. Alternatively, a user may want to understand
why a negated body literal holds in a rule during the debugging process.

A non-existent tuple is characterized by \emph{every} proof tree for the tuple failing to be constructed. The source of failure may be 
 (1) tuples for the construction not being in the EDB/IDB, and/or (2) the constraints of rules not being satisfied. Given the potentially 
 infinite number of failing proof trees, we avoid automatic procedures that 
represent a serious technical challenge and are not guaranteed
to discover a failed proof tree containing the root cause of the fault. In practice,
without a formal description of the root cause of the fault, the 
provenance system cannot decide which failed proof tree is most suitable\footnote{Proof annotations such as introduced in the previous section can only describe existent tuples in the IDB. It is impossible to consider such annotations for tuples that are \emph{not} produced by the specification.}. 

Hence, in our system, we take a pragmatic, semi-automated approach which is 
inspired by existing work such as \cite{ec17,ps18}. Our system leverages user domain 
knowledge and allows user interactions to incrementally guide the construction of a 
single failing proof tree. Each user interaction produces a failing \emph{subproof}, or one level of the proof tree. This failing proof tree provides a succinct representation of valuable 
information for a Datalog user to discover on why an expected tuple is not being produced 
by the specification and does not burden the user with too much unnecessary information.

Formally, we define the problem as follows: given a provenance instance 
$\brackets{\instance, \level}$ computed by the provenance evaluation strategy, a 
tuple $\tuple \notin \instance$, and a rule $\dlrule: \predicate \dlimpl \predicate[1], \ldots, \predicate[n], \constraints(X_1, \ldots, X_n)$ with head relation 
matching $\tuple$, we aim to find a configuration $\tuple[1], \ldots, \tuple[n]$ for the body 
of $\dlrule$, such that either: (1) at least one $\tuple[i] \notin \instance$ or, (2) the 
constraints $\constraints(X_1, \ldots, X_n)$ are not satisfied. Such a configuration forms a failing subproof, and recursively constructing subproofs results in a full failed proof tree. Note that it would be 
impossible to find a configuration where all tuples $\tuple[i] \in \instance$ and 
constraints $\constraints(X_1, \ldots, X_n)$ hold since the prior assumption is that 
$\tuple \notin \instance$. If such an instantiation cannot be found, then the tuple $\tuple$ can be generated by the Datalog specification, and thus $\tuple \in \instance$.

For showing the non-existence of a tuple, the provenance system supports the Datalog user
in constructing the failing proof tree in stages. The debugging query for non-existence has 
three user interaction steps 
that are repeated until the root cause of the fault is found. The user interaction steps are 
as follows:
\begin{enumerate}
\item the user defines a query for the non-existence of a tuple,
\item the user selects a candidate rule from which the tuple may have been derived,
\item the user selects candidate variable values of unbound variables in the rule. 
\end{enumerate}

The system displays the rule application in the failing proof tree 
indicating the portions of the rule that fail 
(i.e., at least one literal / constraint must fail) and the portions of the rule that hold. 

The Datalog user can continue the query with the newly
found failing literals guiding the system to find the root cause of the fault. 
This process is semi-automated since the nature of the fault is known by the Datalog user 
only.

\textbf{Example:} Consider the example from Figure~\ref{fig:example} for which we want to query the non-existence of the 
tuple $vpt(b, l_4)$. In the first user interaction step, the Datalog user queries for an explanation for the non-existence of the tuple  $vpt(b, l_4)$. 
Then, the Datalog user selects an appropriate rule such as rule $r_2$. 

The system can then produce a partial instantiation for the body of the rule, where variables matching the 
head are replaced by concrete values from $\tuple$ such as,
\begin{align*}
    vpt(b, l_4) \dlimpl assign(b, Var_2), vpt(Var_2, l_4)
\end{align*}

In the last user interaction step, the Datalog user selects instantiations for the remaining free variables in rule $r_2$. 
For example, the Datalog user may choose the value $d$ for the free variable $Var_2$.
\begin{align*}
    vpt(b, l_4) \dlimpl assign(b, d), vpt(d, l_4)
\end{align*}
Given the instantiated rule, the provenance system will evaluate which portions of the subproof fail and which portions hold. 
With that information, the Datalog user can continue the exploration of the failing portions to find the root cause 
of the fault. A simple colour labelling helps the Datalog user to indicate which portions fail and hold, respectively.
\begin{prooftree}
    \AxiomC{$\textcolor{red}{\textit{assign}(b, d)\ \text{\sffamily X}}$}

    \AxiomC{$\textcolor{blue}{\textit{vpt}(d, l_4)\ \checkmark}$}
    \RightLabel{$r_2$}
    \BinaryInfC{$\textcolor{red}{\textit{vpt}(b, l_4)}$}
\end{prooftree}
In the above example, the red color and $\text{\sffamily X}$ denotes the non-existence of the tuple $\textit{assign}(b, d)$ in the IDB, i.e., a failing portion of the proof tree. The blue color with $\checkmark$ indicates that 
$\textit{vpt}(d, l_4)$ holds.

In summary, our provenance system constructs a single failed subproof to explain the 
non-existence of a tuple. The construction of the failed subproof is guided by the 
Datalog user to ensure the answer contains a relevant explanation, given the infinitely many possible failed proof trees.  The semi-automatic proof construction approach supports the Datalog 
user by highlighting which portions of the subproof hold and fail, respectively to guide the 
exploration.

\subsection{Alternative Proof Tree Shapes}
Our debugging strategy introduces an \emph{interactive} system to explore fragments of 
proof trees to pinpoint faults in the Datalog specification. Therefore, we 
wish to minimize the number of user interactions required to find the fault. For this aim, 
minimal height proof trees are critical for reducing the number of user interactions in the fault investigation 
phase. The utility of this approach is backed by several user experiences in
industrial-scale applications (see cf. Section 7.1.2 \cite{vldb18}).

While generating proof trees of minimal height is useful for users, in principle our framework 
is more general and can support a variety of metrics that may be beneficial in 
future applications.
In this section, we outline
general properties of proof tree metrics by having the following properties for function $\level$:
\begin{enumerate}
    \item The codomain of $\level$ must have a partial ordering $\sqsubseteq$, so that an update mechanism can be well defined. It is important that the annotation for a tuple can be updated if the same tuple is generated again with smaller (according to $\sqsubseteq$) annotation. This ensures that the resulting annotations are always minimal since tuples will continue being updated with smaller annotations until a fixpoint with annotations is reached.
    \item The metric must be \emph{compositional}, i.e., if $\tuple$ is generated by a rule instantiation $\tuple \dlimpl \tuple[1], \ldots, \tuple[n]$, then $\level \brackets{\tuple} = f \brackets{\level \brackets{\tuple[1]}, \ldots, \brackets{\tuple[n]}}$. The importance of this property is two-fold. Firstly, it ensures that the values of the annotations can be computed during evaluation of the Datalog specification, by encoding $f$ as a functor in the transformed Datalog specification. For example, a rule may be transformed to be $R(X, f \brackets{h_1, \ldots, h_n}) \dlimpl R_1(X_1, h_1), \ldots, R_n(X_n, h_n).$ to compute the value of the annotation.

        Secondly, the compositional property is important for the reconstruction of the 
        proof tree. In the backwards search for a body configuration that may produce 
        the head tuple, $f$ may be encoded as a constraint. For example, a backwards search 
        may be
        \begin{align*}
            ? \dlimpl \predicate[1], \ldots, \predicate[n], \constraints(X_1, \ldots, X_n), \textit{matches}(\tuple, X_1, \ldots, X_n), \level(\tuple) = f \brackets{\level(\predicate[1]), \ldots, \level(\predicate[n])}
        \end{align*}
        where the last constraint ensures that the tuples found from the search correctly generate $\tuple$ with matching annotations.
    \item The metric must be \emph{monotone} and bounded, i.e., $\level \brackets{\tuple[i]} \sqsubseteq \level \brackets{\tuple}$ for all $1 \leq i \leq n$, and \emph{bounded}, i.e., there is a minimum value $c$ such that $c \sqsubseteq \level \brackets{\tuple}$ for any tuple $\tuple$. This property ensures that the provenance evaluation strategy terminates. Monotonicity ensures that with each rule application, the annotation converges towards the minimum value $c$, and once it reaches $c$, then termination must occur.
\end{enumerate}

If a given metric satisfies the above properties, then it can be used instead of proof 
tree height in our framework. Examples of such metrics could be the size of proof trees by 
number of nodes, or a sequence of $k$ proof tree heights describing the smallest $k$ 
proof trees for each tuple. Therefore, our approach could be applicable to wider 
applications than debugging. We leave the integration of other metrics in our provenance evaluation strategy as future work.

\section{Implementation in \souffle} \label{sec:implementation}
In this section, we describe the implementation of our provenance system in 
\souffle~\cite{so16}.  \souffle~\cite{souffle} is an open-source system that is available under the UPL license and is implemented in C++.  
\souffle is a parallel Datalog engine designed for shared memory, multi-core machines, synthesizing highly 
performant parallel C++ code from Datalog specifications. 

Our provenance evaluation strategy and 
proof tree construction system are tightly integrated into the \souffle engine\footnote{Our provenance evaluation strategy is not specific to Souffl\'e -- it can be integrated into any bottom-up evaluation Datalog engine.}. 
Through this tight integration in Souff\'e, we are able to achieve high parallel performance for 
the provenance evaluation, and enable a single evaluation phase to answer multiple 
debugging queries. In contrast, previous approaches~\cite{ec17,sp15,dd12} implement 
a Datalog re-writing scheme, and simply evaluate the re-written Datalog in an existing engine.

The Souffl\'e synthesizer performs a series of specialization steps based on Futamura projections~\cite{futamura99},
which synthesize a C++ program with the same semantics as the Datalog specification.
The main specialization step is the compilation of Datalog into an intermediate representation called Relational Algebra Machine (RAM) which has imperative and relational algebra elements to perform simple relational algebra operations to compute fixed-points for semi-na\"ive evaluation. 
The RAM representation of a Datalog specification is in turn compiled into C++ code. 
The resulting C++ code implements a specialized semi-na\"ive algorithm for the rules in the Datalog specification
that have similar performance to a hand-written program~\cite{so16,pointsto15}. 
In the following, we discuss the implementation of semi-na\"ive evaluation~\cite{fd95} in Souffl\'e and discuss subsequently 
how the synthesised semi-na\"ive evaluation is replaced by the provenance evaluation strategy. 
Semi-na\"ive evaluation avoids re-computations of tuples by assuming that new tuples 
can only be deduced from new tuples in the previous iteration.
This is achieved by creating a \emph{new} and \emph{delta} version of each relation. 
The \emph{new} and \emph{delta} version of a relation store the tuples found in the current iteration and the previous iteration, respectively. 
For example, with a rule
$
    \predicate[0] \dlimpl \predicate[1], \ldots, \predicate[n], \constraints(X_1, \ldots, X_n)
$, each relation $R_k$ is transformed to become a set of relations for each iteration $i$:
\begin{align*}
    R^i_k, new^i_{R_k}, \Delta^i_{R_k}
\end{align*}
where $R^i_k$ stores all the tuples for relation $R_k$ which are computed up until iteration $i$, while $\Delta^i_{R_k}$ stores only the tuples in $R_k$ computed in iteration $i$, without any tuples computed in previous iterations. The relation $new^i_{R_k}$ is an intermediate relation used to compute the $\Delta$ relations.
The essential optimization, compared to na\"ive evaluation, is to realize that in iteration $i+1$, a new tuple is only generated if it directly depends on a tuple generated in iteration $i$. If this condition doesn't hold, i.e., if it depends on knowledge generated in prior iterations, then the tuple would also have been generated in a previous iteration, and thus generating it again would be a redundant computation. 
Thus, a tuple is only generated in iteration $i+1$ if it depends on a $\Delta^i$ relation. This constraint is enforced by transforming the original rule to a set of new Datalog rules which perform semi-na\"ive evaluation:
\begin{align*}
    new^{i+1}_{R_0}(X_0) &\dlimpl \Delta^i_{R_1}(X_1), \predicate[2], \ldots, \predicate[n], \constraints(X_1, \ldots, X_n) \\
    \ldots \\
    new^{i+1}_{R_0}(X_0) &\dlimpl \predicate[1], \ldots, \Delta^i_{R_k}(X_k), \ldots, \predicate[n], \constraints(X_1, \ldots, X_n) \\
    \ldots \\
    new^{i+1}_{R_0}(X_0) &\dlimpl \predicate[1], \ldots, \predicate[n-1], \Delta^i_{R_{n}}(X_n), \constraints(X_1, \ldots, X_n) \\
\end{align*}
Thus, $new^{i+1}_{R_0}$ contains tuples of $R_0$ which depend directly on tuples generated in iteration $i$. The relation $\Delta^{i+1}_{R_0}$ is computed as
\begin{align*}
    \Delta^{i+1}_{R_0} = new^{i+1}_{R_0} - R^i_0
\end{align*}
where the relations are viewed as sets of tuples and $-$ denotes set minus. Thus, $\Delta^{i+1}_{R_0}$ contains only tuples generated in iteration $i$, and no tuples generated in previous iterations. The relation $R^{i+1}_0$ denotes all tuples generated in iterations $0,..,i+1$, and is computed as the union
\begin{align*}
    R^{i+1}_0 = R^i_0 \cup \Delta^{i+1}_{R_0}
\end{align*}
With these auxiliary relations, the final result for the relation $R_0$ is the final $R^i_0$ once a fix-point is reached, i.e., the result of the Datalog specification has stabilized. Note that \souffle evaluates the $\Delta^{i+1}_R$ relation by computing the tuples without an explicit set minus operation since 
an existence check determines whether the tuple already exists in $R^i$ relation before it is inserted into $\Delta^{i+1}_R$. 
For example,  Figure~\ref{fig:existencecheck} depicts a snippet of a \souffle RAM program, part of the semi-na\"ive evaluation of the rule $r_2: vpt(Var, Obj) \dlimpl assign(Var, Var2), vpt(Var2, Obj)$. The join is performed via a loop nest iterating over tuples of relations efficiently via indexes~\cite{vldb18}. 
Line 2 computes the new tuples to be added to the $\Delta^{i+1}_{vpt}$ relation, where the \texttt{NOT IN} operation is an existence check to ensure the generated tuple does not already exist in the $vpt^i$ relation.
\begin{figure}[h]
    \footnotesize
\begin{varwidth}{\linewidth}
    \begin{Verbatim}[numbers=left, xleftmargin=2em]
SCAN assign AS t0 
  SEARCH @delta_vpt AS t1 ON INDEX t1.c0=t0.y WHERE (t0.x,t1.c1) NOT IN vpt
    INSERT (t0.x, t1.c1) INTO @new_vpt
\end{Verbatim}
\end{varwidth}
    \caption{Existence check prior to inserting}
    \label{fig:existencecheck}
\end{figure}

\subsection{Implementing Provenance Evaluation Strategy} \label{sec:bottomupimplementation}
The main challenge of integrating the provenance evaluation strategy is to allow the  synthesis to be aware of proof annotations. 
In particular, the semi-na\"ive evaluation machinery must be replaced by the provenance evaluation strategy as described in Section~\ref{sec:proveval} to handle the proof annotations. 
Another critical part of this machinery is the synthesis of data structures~\cite{vldb18,ppopp19} for relations that 
are specialized for the operations in the program. The synthesized data structures have to be
extended for proof annotations as well, enabling an update semantics in Datalog for the annotations. 

For the provenance evaluation strategy, we need to amend relations by extra attributes to contain the proof annotations. 
We utilize the synthesis pipeline of Souffl\'e by introducing two provenance attributes for each relation. The first attribute represents the rule number of the rule which generated the tuple, and the second attribute represents the proof tree height. 
These two new attributes are introduced for each relation at the syntactic level in Souffl\'e. 
A predicate $\predicate$ is transformed into $R(X, \rulenum, \levelnum)$. 
For the sake of readability in this text, we distinguish between original and provenance tuples, where an original tuple is a provenance tuple without proof annotations.
We rewrite all logic rules at the syntactic level to take account of the two provenance attributes constituting the proof annotation for our system, and to compute the value of the annotations. 
The proof annotation instrumentation is performed as follows where a rule
\begin{align*}
    \dlrule_k: \predicate \dlimpl \predicate[1], \ldots, \predicate[n], \constraints(X_1, \ldots, X_n)
\end{align*}
is transformed into:
\begin{align*}
    \dlrule_k: \ &R(X, k, \max(\levelnum_1, \ldots, \levelnum_n) + 1) \dlimpl \\
    &R_1(X_1, \_, \levelnum_1) ,\ldots, R_n(X_n, \_, \levelnum_n), \constraints(X_1, \ldots, X_n)
\end{align*}
The transformed provenance rule computes level and height annotations for a new tuple, according to the semantics in Section~\ref{sec:proveval}.  
Since the rules are known statically,  the rule number annotation $k$ can be assigned a constant value for each rule in the transformed specification.
The rule numbers of the body predicates are ignored by using $\_$ in each body 
predicate since they do not influence the head predicate. 

The transformed provenance rule syntactically represents the computation of proof annotations during rule evaluation. However, the actual execution of provenance rules differs from a standard semi-na\"ive evaluation as presented in~\cite{fd95}.
The reason is the update mechanism: a newly discovered provenance tuple may overwrite an existing provenance tuple if they are the same original tuple, but the new tuple has a smaller height annotation.

The provenance evaluation strategy extends the semi-na\"ive algorithm by updating the rule number and the height annotation of a tuple (as defined by $\treeconsequence$) if the original tuple already exists and the newly generated tuple has smaller height annotation. In other words, if a smaller proof tree could be found in a subsequent iteration for the same tuple, then an update occurs. Otherwise, if the original tuple doesn not already exist, the provenance tuple is inserted into the relation as is. Thus, the rule computing $\Delta^{i+1}_{R_0}$ in the semi-naive evaluation is modified to accommodate the possibility of updates, i.e., 
\begin{align*}
    \Delta^{i+1}_{R_0} = \brackets{new^{i+1}_{R_0} - R^i_0} \cup \braces{\tuple \in R^i_0 \mid h^{i}(\tuple) > h^{i+1}(\tuple)}
\end{align*}
where $h^i$ denotes the height annotations in iteration $i$.

Therefore, with our provenance evaluation strategy, we integrate the possibility of an annotation update into the data structure. During an insertion operation, if the same original tuple is discovered, with a larger height annotation than the current tuple, then an update occurs. Therefore, we wish to call the insert operation if \emph{either} the tuple does not already exist \emph{or} the existing tuple has the larger proof annotation. A specialized existence check is implemented, implicitly implementing the semantics of the provenance $\Delta^{i+1}_R$ relation. Similar to the standard existence check, the special existence check is implemented as part of the data structure that has been specialized for each relation. 

The result is the RAM snippet in Figure~\ref{fig:provloopnest}. The obvious differences compared to Figure~\ref{fig:existencecheck} are in lines 3-4, where the level annotation is computed within the loop nest, as part of the insertion. Furthermore, the \texttt{PROV NOT IN} operation in line 3 denotes the special provenance existence check, which allows the \texttt{INSERT} to proceed if either the tuple does not already exist, or exists with a larger proof height annotation. Thus, this implements the update semantics discussed above.

\begin{figure}[h]
    \footnotesize
\begin{varwidth}{\linewidth}
    \begin{Verbatim}[numbers=left, xleftmargin=2em]
SCAN assign AS t0
  SEARCH @delta_vpt AS t1 ON INDEX t1.x=t0.y
    IF (t0.x,t1.y,_,(max(t0.@height,t1.@height)+number(1))) PROV NOT IN vpt
      INSERT (t0.x, t1.y, number(2), (max(t0.@height,t1.@height)+number(1))) INTO @new_vpt
\end{Verbatim}
\end{varwidth}
    \caption{Provenance version of RAM loop nest}
    \label{fig:provloopnest}
\end{figure}

However, the specializations in the data structures still remain to be discussed. Souffl\'e employs a highly specialized parallel B-Tree data structure~\cite{ppopp19}, with index orderings for the attributes generated automatically via an optimization problem~\cite{vldb18}. The B-Tree employs a special optimistic read/write lock for each node, allowing high throughput for parallel insertion. During an \verb|insert| operation, a thread may obtain a read lease for each node as it checks whether the tuple to be inserted already exists. If an insertion is required, it checks if the lease has changed, and restarts the whole procedure if it has. Otherwise, it upgrades to a write lease and inserts the tuple into the correct position in the B-Tree. The data structure also takes advantage of \souffle's Datalog evaluation setting, where a single relation is either read from, or written to, but never at the same time. Therefore, there are no interleaved reads and writes, and so read operations are not synchronized.

With the proof annotations, we modify these specializations so that they can take into account the provenance semantics. The important step is the \emph{update} semantics, and thus we integrate an update mechanism into the \verb|insert| operation, without requiring to delete and then re-insert.
The provenance evaluation strategy requires two main modifications to our B-Tree data structure. Firstly, the existence check for insertion should consider only the original tuple, and ignore annotations. This ensures that Datalog set semantics are preserved and that no duplicate original tuples can exist. However, note that we still need to retrieve the full tuple, including its proof annotations. This is important for the proof tree construction, discussed in the next section. To address this concern, we use different lexicographical orderings of indices for the \verb|insert| and \verb|retrieve| operations. The \verb|insert| index order does not include the attributes storing the proof annotations (so that annotations are not considered when checking existence), while the \verb|retrieve| index order does. We also need to ensure that updating an annotation does not change the ordering of tuples according to the index; otherwise subsequent index supported searches will fail. Therefore, the \verb|retrieve| index order requires the annotation attributes to be at the end, as this guarantees that an update to the annotations does not affect tuple ordering.

Secondly, we must have a mechanism to update existing tuples to implement the update semantics in Section~\ref{sec:proveval}. To achieve this, we modified the \verb|insert| operation so that it may also update any existing tuple with a smaller proof annotation. This insertion first requires to check if the original tuple exists in the B-Tree. If it does, rather than aborting (as it would with standard Datalog evaluation), the insertion then checks the annotation. If the height annotation of the existing tuple is larger than the tuple to be inserted, then the annotations of the existing annotation are updated with new values. Note that during an update, a read lease also needs to be validated and upgraded to a write lease. The integration of the update into the \verb|insert| operation avoids any need to delete and re-insert tuples, thus improving the efficiency of the provenance evaluation strategy.
These modifications to the B-Tree reflect the desired insertion semantics, and updates are handled directly in the \verb|insert| operation. All other retrieval operations for the B-Tree are not modified, and tuples can be retrieved as normal, including their proof annotations.

\subsection{Implementing a Proof Tree Construction User Interface}
After the provenance evaluation strategy is completed, the proof tree construction stage is driven by the user.
It is critical that this process is also fully parallelized and highly performant.

The user interface is implemented as a command line, where the user can enter queries to explain the existence and non-existence of a tuple. For example, the query \texttt{explain alias("a", "b")} results in the proof tree in Figure~\ref{fig:explainscreenshot}. Explaining the non-existence of a tuple, i.e., the query \texttt{explainnegation vpt("b", "l4")}, results in the interaction in Figure~\ref{fig:negationscreenshot}. The user may also select the size of proof tree fragments to display, i.e., \texttt{setdepth 6} instructs the system to construct 6 levels of the proof tree in the next query. For each debugging query, the system invokes the relevant procedure to construct a proof tree fragment.

\begin{figure}[h]
    \footnotesize
    \centering
\begin{varwidth}{\linewidth}
\begin{verbatim}
> explain alias("a", "b")
                                 new("a", "l1")
                                 -----------(R1)
new("a", "l1")  assign("b", "a") vpt("a", "l1")
-----------(R1) -----------------------------(R2)
vpt("a", "l1")           vpt("b", "l1")          "a" != "b"
---------------------------------------------------------(R1)
                 alias("a", "b")
\end{verbatim}
\end{varwidth}
\caption{Explaining the tuple $alias(a, b)$}
\label{fig:explainscreenshot}
\end{figure}

\begin{figure}[h]
    \footnotesize
    \centering
    \begin{BVerbatim}[commandchars=\\\{\}]
Enter command > explainnegation vpt("b", "l4")
1: vpt(Var,Obj) :-
   new(Var,Obj).

2: vpt(Var,Obj) :-
   assign(Var,Var2),
   vpt(Var2,Obj).

3: vpt(Var,Obj) :-
   load(Var,Y,F),
   store(P,F,Q),
   alias(P,Y),
   vpt(Q,Obj).

Pick a rule number: 2
Pick a value for Var2: d
assign("b", "d") X  vpt("d", "l4") \checkmark
----------------------------------(R2)
            vpt("b","l4")
\end{BVerbatim}
    \caption{Explaining the non-existence of the tuple $vpt("b", "l4")$}
\label{fig:negationscreenshot}
\end{figure}

It is critical that the proof tree construction procedures are highly performant since the constructed IDB may be very large, and we may need to search through many tuples to construct a proof tree fragment. Therefore, the proof tree construction procedures must be tightly integrated into the \souffle system to enable a high-performance, parallel search. We integrate these procedures into the \souffle RAM, utilizing the existing translation from RAM to parallel C++. Moreover, since the provenance evaluation strategy uses specialized B-Tree data structures, the proof tree construction can also utilize index supported searches to find relevant tuples.

Recall that the proof tree construction is facilitated by searches for subproofs. Therefore, we require a specialized framework in the \souffle RAM to implement a subproof search. We term this framework a \emph{subroutine} framework. Each subproof search can be implemented as a subroutine, thus integrating with the \souffle RAM.

To explain the existence of a tuple, a subproof search is required to search the body of a rule for matching body tuples, satisfying the constraint that proof tree height is lower than the current tuple. This backwards search for a single rule is implemented as a subroutine. For example, the rule $r_2: vpt(Var, Obj) \dlimpl assign(Var, Var2), vpt(Var2, Obj)$ is implemented as the subroutine in Figure~\ref{fig:subroutine}.

\begin{figure}[h]
    \footnotesize
\begin{varwidth}{\linewidth}
    \begin{Verbatim}[numbers=left, xleftmargin=2em]
SUBROUTINE vpt_2_subproof
  SCAN assign AS t0 WHERE t0.x = argument(0) AND t0.@level_number < argument(2)
    SEARCH vpt AS t1 ON INDEX t1.x=t0.y AND t1.y=argument(1) WHERE t1.@level_number < argument(2)
      RETURN (t0.x, t0.y, t0.@rule_number, t0.@level_number, t0.y,
              t1.y, t1.@rule_number, t1.@level_number)
\end{Verbatim}
\end{varwidth}
    \caption{Subroutine for example program}
    \label{fig:subroutine}
\end{figure}

Lines 2-5 represent a search through a database that is already constructed by the initial bottom-up evaluation, to find tuples which satisfy the constraints required for the construction of a proof tree fragment. The values of \verb|argument(0)| and \verb|argument(1)| are the values in the head tuple, and \verb|argument(2)| is the height annotation of the head tuple. Therefore, this subproof search is paramterized by the head tuple. The relations of the body atoms, $assign$ and $vpt$ are searched to find tuples \verb|t0| and \verb|t1| which match the body of the rule. Importantly, the constraints for the level number are encoded in lines 3-4, ensuring that the resulting tuples have level number annotations lower than the query tuple. As shown in Section~\ref{sec:guidedtopdown}, being able to apply this operation recursively allows us to generate the full proof tree.

Similarly, to generate a failed subproof to explain the non-existence of a tuple, the search for failing and holding parts of a subproof is implemented as a subroutine. Given an instantiated rule (which is produced via user interaction), a subroutine returns whether each body tuple is in the IDB and whether each constraint is satisfied.

\section{Experiments}\label{sec:experiments}
In this section, we conduct experiments with the provenance evaluation strategy and provenance queries 
implemented in \souffle (see Section~\ref{sec:implementation}). The experiments are conducted 
for large-scale Datalog specifications. We have the following experimental research claims:
\begin{itemize}
    \item[] \textbf{Claim-I: Scalable Provenance Evaluation Strategy.} Our provenance evaluation strategy only has a minor impact on the runtime  performance, i.e., our provenance evaluation strategy remains scalable for realistic datasets and rulesets.
    \item[] \textbf{Claim-II: Scalable Proof Tree Construction.} The provenance queries for exploring proof paths scales to large proof 
     tree fragments, allowing efficient interactive exploration of proof trees.
    \item[] \textbf{Claim-III: Need for Proof Tree Exploration.} For realistic benchmarks, minimal proof trees are still very large substantiating the need for interactive proof tree exploration.
\end{itemize}

As an experimental testbed, we use the \DOOP~\cite{doop09} points-to analysis framework. We experiment with \DOOP's \textit{context-insensitive} and \textit{1-object-sensitive, 1-heap} (1-obj, 1-heap) analyses that exhibit different
runtime complexities.  As inputs for the points-to analyses, we compute the points-to sets for the DaCapo 2006 Java program benchmarks. Each analysis contains approx. 300 relations, 850 rules and produces up to approx. 26 million output tuples on the DaCapo benchmarks  (See Table~\ref{table:doopstats}). Our experiments were performed on a computer with an Intel Xeon Gold 6130 CPU and 192 GB of memory, running Fedora 27. \souffle executables were generated using GCC 7.3.1.

\begin{table}[t]
    \centering
\footnotesize
    \begin{tabular}{|l|r|r|r|r|}
        \hline
        & \multicolumn{2}{c|}{context-insensitive} & \multicolumn{2}{c|}{1-obj, 1-heap} \\
        \cline{2-5}
        \textbf{Benchmark}& \multicolumn{1}{c|}{\# EDB}  & \multicolumn{1}{c|}{\# IDB}  & \multicolumn{1}{c|}{\# EDB} & \multicolumn{1}{c|}{\# IDB} \\
        \hline
        antlr       & 8,319,095 & 21,832,232 & 8,319,095 & 24,145,648 \\
        bloat       & 4,468,277 & 13,104,020 & 4,468,277 & 15,417,516 \\
        chart       & 8,743,770 & 22,975,742 & 8,743,729 & 25,289,200 \\
        eclipse     & 4,389,770 & 13,076,265 & 4,389,799 & 15,389,708 \\
        fop         & 8,769,583 & 22,970,533 & 8,769,572 & 25,283,913 \\
        hsqldb      & 9,007,087 & 24,561,921 & 9,007,087 & 26,875,437 \\
        jython      & 5,203,400 & 17,158,375 & 5,203,400 & 19,471,797 \\
        luindex     & 4,396,394 & 13,415,336 & 4,396,394 & 15,728,788 \\
        lusearch    & 4,396,415 & 13,415,390 & 4,396,394 & 15,728,788 \\
        pmd         & 8,388,202 & 22,853,676 & 8,388,202 & 25,167,134 \\
        xalan       & 8,670,980 & 23,488,951 & 8,670,966 & 25,802,385 \\
        \hline
    \end{tabular}
    \caption{Statistics for \DOOP benchmarks}
    \label{table:doopstats}
\end{table}

\subsection{Performance of the Provenance Evaluation Strategy}
In Table~\ref{table:doopdacapo}, we present the runtime and memory consumption of \souffle with 8 threads, comparing standard \souffle with our provenance evaluation strategy with proof annotations. We use the DaCapo benchmarks with both the context-insensitive and 1-obj, 1-heap analysis.
As expected, \souffle with proof annotations incurs an overhead during evaluation. This overhead for the provenance evaluation strategy is typically within a factor of 1.3 which is a small overhead to pay for being able to generate minimal proof trees for all possible tuples in the IDB. Hence, we demonstrate the viability of the provenance evaluation strategy for large-scale Datalog specifications, substantiating Claim I.  We noticed that the runtime overhead for the context-insensitive analysis was smaller across all benchmarks than that of the 1-obj-1-heap analysis due to cache locality that was more prominent for smaller memory footprints. Note that the overhead for memory consumption is similar to performance overheads, at approximately 1.45$\times$. This overhead results from the storage of extra proof annotations during evaluation. 

In contrast, a na\"ive direct encoding approach (cf. Chapter 5, ~\cite{davidsthesis}), 
where each tuple is annotated with its full subproof (i.e., direct children in the proof tree), resulted in excessive memory usage (up to 100$\times$) on a simple transitive closure experiment with 2000 tuples, and thus cannot be deployed for large-scale Datalog specifications such as those found in \DOOP.
\begin{table}[t]
    \centering
\footnotesize
    \begin{tabular}{| l | r | r | r | r | r | r |}
        \hline
        & \multicolumn{3}{c|}{Runtime (sec)} & \multicolumn{3}{c|}{Memory (MB)} \\
        \cline{2-7}
        \textbf{Benchmark} & No Prov. & Prov. & ($\times$) & No Prov. & Prov. & ($\times$) \\
        \hline
        \multicolumn{7}{|l|}{\textbf{context-insensitive}} \\
        \hline
        antlr       & 9.73 & 12.29 & 1.26 & 595 & 900 & 1.51 \\
        bloat       & 9.54 & 12.25 & 1.28 & 596 & 900 & 1.51 \\
        chart       & 15.89 & 19.60 & 1.23 & 1,103 & 1,604 & 1.45 \\
        eclipse     & 9.64 & 11.76 & 1.22 & 593 & 898 & 1.51 \\
        fop         & 15.57 & 19.48 & 1.25 & 1,079 & 1,579 & 1.46 \\
        hsqldb      & 16.36 & 19.73 & 1.21 & 1,124 & 1,642 & 1.46 \\
        jython      & 11.00 & 13.62 & 1.24 & 731 & 1,090 & 1.49 \\
        luindex     & 9.62 & 12.00 & 1.25 & 594 & 905 & 1.52 \\
        lusearch    & 9.80 & 12.23 & 1.25 & 593 & 904 & 1.52 \\
        pmd         & 15.58 & 18.90 & 1.21 & 1,053 & 1,542 & 1.46 \\
        xalan       & 15.59 & 19.54 & 1.25 & 1,091 & 1,595 & 1.46 \\
        \hline
        geo-mean    & & & 1.24 & & & 1.44 \\
        \hline
        \multicolumn{7}{|l|}{\textbf{1-obj, 1-heap}} \\
        \hline
        antlr       & 10.84 & 12.60 & 1.16 & 936 & 1,310 & 1.40 \\
        bloat       & 15.77 & 22.00 & 1.40 & 732 & 1,082 & 1.48 \\
        chart       & 21.84 & 28.13 & 1.29 & 1,242 & 1,788 & 1.44 \\
        eclipse     & 15.76 & 21.00 & 1.33 & 729 & 1,080 & 1.48 \\
        fop         & 22.21 & 29.63 & 1.33 & 1,216 & 1,756 & 1.44 \\
        hsqldb      & 23.01 & 29.43 & 1.28 & 1,256 & 1,823 & 1.45 \\
        jython      & 17.54 & 22.96 & 1.31 & 868 & 1,270 & 1.46 \\
        luindex     & 15.94 & 21.55 & 1.35 & 730 & 1,086 & 1.49 \\
        lusearch    & 15.95 & 21.25 & 1.33 & 731 & 1,087 & 1.49 \\
        pmd         & 21.58 & 28.21 & 1.31 & 1,190 & 1,725 & 1.45 \\
        xalan       & 22.09 & 28.68 & 1.30 & 1,224 & 1,773 & 1.45 \\
        \hline
        geo-mean    & & & 1.31 & & & 1.46 \\
        \hline
    \end{tabular}
    \caption{Runtime and memory usage overheads for \souffle with and without proof annotations with 8 threads}
    \label{table:doopdacapo}
\end{table}

\begin{figure*}
\captionsetup[subfigure]{justification=centering}
\begin{subfigure}[b]{0.47\textwidth}
    \center
    \begin{tikzpicture}
        \tikzstyle{every node}=[font=\scriptsize],
        \begin{axis}[
                ybar,
            width=\textwidth,
            height=0.7\textwidth,
            bar width=2pt,
            enlargelimits=0.0,
            clip=false,
            enlarge x limits=0.1,
            legend style={
              anchor=north,legend columns=-1},
            ylabel={Runtime (sec)},
            xlabel={Threads},
            symbolic x coords={1, 2, 3, 4, 5, 6, 7, 8, 9, 10, 11, 12, 13, 14, 15, 16},
            xtick=data,
            ymax=1200,
            ymin=0,
            visualization depends on=rawy\as\rawy, 
            after end axis/.code={ 
                \draw [ultra thick, white, decoration={snake, amplitude=1pt}, decorate] (rel axis cs:0,1.05) -- (rel axis cs:1,1.05);
            },
        ]
            \addplot coordinates {(1, 764) (2, 545) (3, 452) (4, 407) (5, 379) (6, 359) (7, 347) (8, 341) (9, 337) (10, 330) (11, 326) (12, 326) (13, 328) (14, 325) (15, 322) (16, 321)};
            \addplot coordinates {(1, 1112) (2, 760) (3, 616) (4, 542) (5, 496) (6, 465) (7, 446) (8, 437) (9, 427) (10, 416) (11, 410) (12, 408) (13, 409) (14, 404) (15, 399) (16, 395)};
            \legend{No Prov., Prov.}
        \end{axis}
    \end{tikzpicture}
    \caption{Total evaluation runtime of \souffle on all DaCapo benchmarks with each \DOOP (context-insensitive and 1-obj, 1-heap) analysis with and without provenance}
    \label{fig:runtime-threadtest}
\end{subfigure}\hfill
\begin{subfigure}[b]{0.47\textwidth}
    \center
    \begin{tikzpicture}
        \tikzstyle{every node}=[font=\scriptsize],
        \begin{axis}[
                ybar,
            width=\textwidth,
            height=0.7\textwidth,
            bar width=2pt,
            enlargelimits=0.0,
            clip=false,
            enlarge x limits=0.1,
            legend style={
              anchor=north,legend columns=-1},
            ylabel={Memory (MB)},
            xlabel={Threads},
            symbolic x coords={1, 2, 3, 4, 5, 6, 7, 8, 9, 10, 11, 12, 13, 14, 15, 16},
            xtick=data,
            ymax=1500,
            ymin=0,
            visualization depends on=rawy\as\rawy, 
            after end axis/.code={ 
                \draw [ultra thick, white, decoration={snake, amplitude=1pt}, decorate] (rel axis cs:0,1.05) -- (rel axis cs:1,1.05);
            },
        ]
            \addplot coordinates {(1, 982) (2, 988) (3, 991) (4, 993) (5, 995) (6, 997) (7, 999) (8, 1000) (9, 1002) (10, 1003) (11, 1004) (12, 1004) (13, 1005) (14, 1005) (15, 1006) (16, 1006)};
            \addplot coordinates {(1, 1425) (2, 1440) (3, 1449) (4, 1454) (5, 1458) (6, 1462) (7, 1465) (8, 1467) (9, 1469) (10, 1471) (11, 1473) (12, 1475) (13, 1476) (14, 1478) (15, 1479) (16, 1479)};
            \legend{No Prov., Prov.}
        \end{axis}
    \end{tikzpicture}
    \caption{Average evaluation memory usage of \souffle on all DaCapo benchmarks with each \DOOP (context-insensitive and 1-obj, 1-heap) analysis with and without provenance}
    \label{fig:memory-threadtest}
\end{subfigure}
\end{figure*}
Figures~\ref{fig:runtime-threadtest} and \ref{fig:memory-threadtest} show the total runtime and average memory usage for each of the \DOOP DaCapo benchmarks with both \DOOP (context-insensitive and 1-obj-1-heap) analyses, running with multiple threads. The figure demonstrates that the provenance evaluation strategy is scalable, in that the overhead is sustainable with an increasing number of threads. We observe that the overall runtime decreases for provenance and without provenance until 5 threads, and increased thereafter. This is caused by the synchronization of \souffle's rule evaluation system and is not specific to provenance. It is interesting to note that the runtime overhead is larger with fewer threads, being 1.45$\times$ for 1 thread while being 1.23$\times$ for 16 threads. Again, this is related to the underlying hardware architecture providing caches and memory lanes for each core. With more threads, the memory bandwidth to access the logical relations with proof annotations improves. 

The memory usage of the provenance evaluation strategy has a consistent overhead of 1.45$\times$, which aligns with our expectations that there would be a reasonable overhead associated with storing the provenance annotations per tuple. Note that this overhead is constant over any number of threads since the amount of extra information stored overall does not change with the number of threads. 

\textbf{Comparison with current approaches:} The current state of the art in tracking Datalog provenance is to instrument the specification with a given provenance query. The instrumented Datalog specification can then be evaluated using any Datalog engine. One example of this approach is the top-$k$ approach \cite{sp15,Deutch2018}, where Datalog specifications are instrumented based on a provenance query taking the form of a \emph{derivation tree pattern}.

For our experiments, we implemented the instrumentation algorithm presented in \cite{Deutch2018}, and evaluated the resulting Datalog using \souffle, again using \DOOP as the test Datalog specification. Since the instrumentation is done for a specific query, we chose an arbitrary tuple from the $VarPointsTo$ relation in \DOOP. 

\begin{table}[t]
    \centering
\footnotesize
    \begin{tabular}{| l | r | r |}
        \hline
        & Runtime (sec) & Average Memory (MB) \\
        \hline
        Our approach & 3:02.7 & 1,289 \\
        Top-$k$ & 3:02.3 & 1,289 \\
        \hline
    \end{tabular}
    \caption{Runtime and memory usage overheads for our provenance approach compared to \cite{Deutch2018}. Runtime is the total over all context-insensitive DaCapo analyses, and memory usage is the average over all context-insensitive DaCapo analyses.}
    \label{table:topkcomparison}
\end{table}

The results in Figure~\ref{table:topkcomparison} showed that during evaluation time, the 
difference in both runtime and memory usage is less than 1\%, demonstrating that our 
provenance encoding scheme is as scalable as the state-of-the-art. However, with our 
approach, we are able to answer \emph{any} provenance query during proof construction time.

\subsection{Proof Tree Construction}
For the construction of proof trees, the performance of the provenance queries is instrumental. A debugging query 
constitutes a backward search for a rule (i.e., reverting the computational direction of a rule). 
The construction of the proof tree is performed level by level. The expansion of a node in the proof tree
 represents a single debugging query. 

In Figure~\ref{prooftime}, we show the time for proof tree fragments with levels up to a height of 20. 
We initiate the proof tree construction for randomly sampled output tuples in the \DOOP DaCapo benchmarks.
In the figure, we plot the runtime against the number of nodes in the proof tree fragment. Even for 20 levels, these proof trees contain over 15,000 nodes. Considering that full proof trees may have heights over 200, the corresponding full proof trees would be intractable to compute and understand due to exponential growth.
However, this experiment shows that the construction of proof trees is approximately linear in the size of the tree. Therefore, provenance queries can be efficiently computed, and the method will scale well for interactive use. 
The interactive exploration of proof trees is scalable, with each debugging query on average taking less than 1 ms per node. 

\subsection{Characteristics of Proof Trees}
In the following experiments,
we demonstrate the difficulty of proof tree construction for Datalog specifications at large 
scale. Figure~\ref{proofheight} shows the distribution of heights of full proof trees for the DaCapo benchmarks. 
The proof tree  heights can be more than 300. While this may not seem prohibitive, the expected 
number of nodes in the proof tree is exponential in height. A non-linear regression performed on the sizes of actual proof trees, suggests that the branching factor of proof trees is approximately $1.466$ for the DaCapo benchmarks. Therefore, since larger proof trees will have an exponential number of nodes, it is computationally intractable to construct full proof trees for these large specifications. Besides, there is a usability challenge in generating meaningful explanations for the existence of a tuple, which is addressed by the interactive exploration of fragments of a proof tree, with the user exploring only relevant fragments. This is in contrast to a full proof tree, where a user may have to interpret millions of nodes to find an explanation.

\begin{figure}
\captionsetup[subfigure]{justification=centering}
\begin{subfigure}[b]{0.47\textwidth}
    \begin{tikzpicture}\tikzstyle{every node}=[font=\scriptsize]

    \definecolor{color0}{rgb}{0.12156862745098,0.466666666666667,0.705882352941177}

    \begin{axis}[
        width=\textwidth,
        height=0.7\textwidth,
    xlabel={Number of nodes},
    ylabel={Time (seconds)},
    xmin=0, xmax=17860.4533266129,
    ymin=0, ymax=18.4649669857143,
    tick align=outside,
    tick pos=left,
    x grid style={white!69.01960784313725!black},
    y grid style={white!69.01960784313725!black}
    ]
    \addplot [only marks, draw=color0, fill=color0, colormap/viridis]
    table{%
    x                      y
    +1.496400000000000e+04 +1.306290000000000e+01
    +1.531800000000000e+04 +1.345250000000000e+01
    +5.991000000000000e+03 +6.554370000000000e+00
    +5.451000000000000e+03 +5.616100000000000e+00
    +5.464000000000000e+03 +3.705830000000000e+00
    +3.839000000000000e+03 +3.617310000000000e+00
    +5.674000000000000e+03 +5.943610000000000e+00
    +1.358600000000000e+04 +1.297980000000000e+01
    +3.631000000000000e+03 +2.800960000000000e+00
    +1.136900000000000e+04 +1.089360000000000e+01
    +7.185000000000000e+03 +3.755450000000000e+00
    +1.442300000000000e+04 +8.230449999999999e+00
    +7.013000000000000e+03 +4.598570000000000e+00
    +8.412000000000000e+03 +5.383030000000000e+00
    +2.158000000000000e+03 +1.238990000000000e+00
    +6.219000000000000e+03 +3.754490000000000e+00
    +8.720000000000000e+03 +5.998910000000000e+00
    +1.084900000000000e+04 +6.339260000000000e+00
    +4.759000000000000e+03 +3.030670000000000e+00
    +1.122400000000000e+04 +7.664280000000000e+00
    +1.669400000000000e+04 +1.564740000000000e+01
    +9.337000000000000e+03 +8.324759999999999e+00
    +1.571600000000000e+04 +1.552220000000000e+01
    +1.099200000000000e+04 +1.149270000000000e+01
    +5.260000000000000e+02 +2.324460000000000e-01
    +7.235000000000000e+03 +6.555820000000000e+00
    +8.071000000000000e+03 +8.850700000000000e+00
    +1.206800000000000e+04 +1.134560000000000e+01
    +1.219800000000000e+04 +1.246870000000000e+01
    +1.145800000000000e+04 +1.129550000000000e+01
    +1.226000000000000e+04 +7.989440000000000e+00
    +1.607200000000000e+04 +1.187880000000000e+01
    +3.748000000000000e+03 +2.656230000000000e+00
    +4.262000000000000e+03 +2.840810000000000e+00
    +4.444000000000000e+03 +3.769740000000000e+00
    +7.949000000000000e+03 +4.160000000000000e+00
    +7.332000000000000e+03 +5.265710000000000e+00
    +6.182000000000000e+03 +5.053450000000000e+00
    +1.546100000000000e+04 +1.036740000000000e+01
    +1.122600000000000e+04 +8.057200000000000e+00
    +1.132600000000000e+04 +1.155110000000000e+01
    +1.139400000000000e+04 +1.158080000000000e+01
    +1.142500000000000e+04 +1.168040000000000e+01
    +5.246000000000000e+03 +4.959010000000000e+00
    +4.349000000000000e+03 +4.363260000000000e+00
    +6.628000000000000e+03 +5.569760000000000e+00
    +4.991000000000000e+03 +5.260560000000000e+00
    +3.159000000000000e+03 +4.146190000000000e+00
    +1.118200000000000e+04 +1.097340000000000e+01
    +1.170000000000000e+04 +1.108890000000000e+01
    +1.274400000000000e+04 +1.093060000000000e+01
    +1.365200000000000e+04 +1.314270000000000e+01
    +7.635000000000000e+03 +7.776100000000000e+00
    +1.147100000000000e+04 +1.318390000000000e+01
    +8.260000000000000e+03 +8.513700000000000e+00
    +5.005000000000000e+03 +4.282570000000000e+00
    +7.785000000000000e+03 +9.578900000000001e+00
    +6.501000000000000e+03 +8.557869999999999e+00
    +9.325000000000000e+03 +8.120210000000000e+00
    +1.122300000000000e+04 +1.128100000000000e+01
    +1.196300000000000e+04 +9.155900000000001e+00
    +1.382900000000000e+04 +1.068970000000000e+01
    +1.038100000000000e+04 +9.656040000000001e+00
    +4.679000000000000e+03 +3.602790000000000e+00
    +8.619000000000000e+03 +4.679610000000000e+00
    +6.177000000000000e+03 +4.734700000000000e+00
    +7.972000000000000e+03 +6.195850000000000e+00
    +1.125500000000000e+04 +7.362590000000000e+00
    +4.012000000000000e+03 +2.700840000000000e+00
    +1.130800000000000e+04 +8.157230000000000e+00
    +8.362000000000000e+03 +4.985090000000000e+00
    +1.111700000000000e+04 +8.092599999999999e+00
    +1.159600000000000e+04 +7.192260000000000e+00
    +9.751000000000000e+03 +6.553330000000000e+00
    +1.921000000000000e+03 +5.796680000000000e-01
    +1.252400000000000e+04 +8.075350000000000e+00
    +7.949000000000000e+03 +5.544160000000000e+00
    +4.141000000000000e+03 +2.945290000000000e+00
    +1.281600000000000e+04 +8.371040000000001e+00
    +1.129800000000000e+04 +7.939090000000000e+00
    +1.220200000000000e+04 +8.099380000000000e+00
    +7.650000000000000e+03 +6.224580000000000e+00
    +6.781000000000000e+03 +3.925980000000000e+00
    +7.847000000000000e+03 +4.978110000000000e+00
    +3.998000000000000e+03 +2.080270000000000e+00
    +1.177400000000000e+04 +8.991740000000000e+00
    +9.019000000000000e+03 +6.630830000000000e+00
    +1.099500000000000e+04 +6.453680000000000e+00
    +1.618900000000000e+04 +1.042230000000000e+01
    +1.122300000000000e+04 +7.411490000000000e+00
    +1.035700000000000e+04 +9.417920000000001e+00
    +6.983000000000000e+03 +6.239100000000000e+00
    +9.990000000000000e+02 +5.759860000000000e-01
    +2.394000000000000e+03 +1.562900000000000e+00
    +1.784000000000000e+03 +2.214430000000000e+00
    +8.019000000000000e+03 +6.668010000000000e+00
    +3.183000000000000e+03 +2.860000000000000e+00
    +4.895000000000000e+03 +4.411710000000000e+00
    +8.400000000000000e+03 +8.548439999999999e+00
    +1.206300000000000e+04 +1.052140000000000e+01
    +1.228000000000000e+04 +1.196080000000000e+01
    +8.580000000000000e+03 +7.449100000000000e+00
    +7.590000000000000e+02 +7.213810000000000e-01
    +9.188000000000000e+03 +9.852230000000000e+00
    +1.703500000000000e+04 +1.730880000000000e+01
    +1.292700000000000e+04 +1.455800000000000e+01
    +1.495200000000000e+04 +1.759250000000000e+01
    +1.190600000000000e+04 +1.075990000000000e+01
    +4.802000000000000e+03 +5.510990000000000e+00
    +1.169400000000000e+04 +1.092390000000000e+01
    };
    \addplot [color0, forget plot]
    table {%
    14964 12.7191880661533
    15318 13.0250755934765
    5991 4.96571692866482
    5451 4.49910883613793
    5464 4.51034199392098
    3839 3.10619727103913
    5674 4.69180069657033
    13586 11.5284733411495
    3631 2.92646674651026
    11369 9.61278789460857
    7185 5.99743926658539
    14423 12.2517158845662
    7013 5.84881594822497
    8412 7.05767654349372
    2158 1.65366356078412
    6219 5.1627292343984
    8720 7.32381597404609
    10849 9.16346158328638
    4759 3.90115920645532
    11224 9.48749498087449
    16694 14.2140621403599
    9337 7.85695892421108
    15716 13.36898303945
    10992 9.28702631889998
    526 0.243470214480628
    7235 6.04064371959714
    8071 6.76302217395359
    12068 10.2167861477128
    12198 10.3291177255434
    11458 9.68969182096948
    12260 10.3826912472779
    16072 13.6765987448937
    3748 3.02756516655775
    4262 3.47170694351853
    4444 3.6289711524813
    7949 6.65760330860492
    7332 6.12446035843993
    6182 5.1307579391697
    15461 13.1486403290901
    11226 9.48922315899496
    11326 9.57563206501846
    11394 9.63439012111444
    11425 9.66117688198173
    5246 4.32197057878976
    4349 3.54688269175898
    6628 5.51614166003451
    4991 4.10162786842984
    3159 2.51861671007934
    11182 9.45120324034463
    11700 9.89880137354635
    12744 10.8009103524317
    13652 11.585503219125
    7635 6.38627934369114
    11471 9.70092497875254
    8260 6.926335006338
    5005 4.11372511527312
    7785 6.51589270272638
    6501 5.40640234938466
    9325 7.84658985548826
    11223 9.48663089181426
    11963 10.1260567963881
    13829 11.7384469827866
    10381 8.7590679030964
    4679 3.83203208163652
    8619 7.23654297896236
    6177 5.12643749386853
    7972 6.67747735699032
    11255 9.51428174174178
    4012 3.25568467845979
    11308 9.56007846193423
    8362 7.01447209048197
    11117 9.39503745142935
    11596 9.80893611128191
    9751 8.21469179514836
    1921 1.44887445350843
    12524 10.61081075918
    7949 6.65760330860492
    4141 3.3671521672301
    12816 10.8631247647686
    11298 9.55143757133188
    12202 10.3325740817843
    7650 6.39924067959466
    6781 5.64834728625046
    7847 6.56946622446095
    3998 3.2435874316165
    11774 9.96274396400374
    9019 7.58217860305635
    10995 9.28961858608068
    16189 13.7776971649412
    11223 9.48663089181426
    10357 8.73832976565076
    6983 5.82289327641792
    999 0.652184339971776
    2394 1.85758857899958
    1784 1.33049425225624
    8019 6.71808954282137
    3183 2.53935484752498
    4895 4.01867531864728
    8400 7.0473074747709
    12063 10.2124657024116
    12280 10.3999730284826
    8580 7.2028435056132
    759 0.444802965515379
    9188 7.72820965423607
    17035 14.5087165099
    12927 10.9590386504547
    14952 12.7088189974305
    11906 10.0768037199548
    4802 3.93831503604542
    11694 9.89361683918494
    };
    \end{axis}

    \end{tikzpicture}
\caption{Proof Tree Construction Time}
\label{prooftime}
\end{subfigure}\hfill
\begin{subfigure}[b]{0.47\textwidth}
    \begin{tikzpicture}\tikzstyle{every node}=[font=\scriptsize]

    \definecolor{color0}{rgb}{0.12156862745098,0.466666666666667,0.705882352941177}

    \begin{axis}[
        width=\textwidth,
        height=0.7\textwidth,
    xlabel={Height},
    ylabel={Number},
    xmin=-11.15, xmax=300.15,
    ymin=0, ymax=653776.2,
        ybar=0pt,
    tick align=outside,
    tick pos=left,
    x grid style={white!69.01960784313725!black},
    y grid style={white!69.01960784313725!black}
    ]
    \draw[fill=color0,draw opacity=0] (axis cs:3,0) rectangle (axis cs:5.83,374);
    \draw[fill=color0,draw opacity=0] (axis cs:5.83,0) rectangle (axis cs:8.66,2068);
    \draw[fill=color0,draw opacity=0] (axis cs:8.66,0) rectangle (axis cs:11.49,4279);
    \draw[fill=color0,draw opacity=0] (axis cs:11.49,0) rectangle (axis cs:14.32,6809);
    \draw[fill=color0,draw opacity=0] (axis cs:14.32,0) rectangle (axis cs:17.15,11638);
    \draw[fill=color0,draw opacity=0] (axis cs:17.15,0) rectangle (axis cs:19.98,9867);
    \draw[fill=color0,draw opacity=0] (axis cs:19.98,0) rectangle (axis cs:22.81,20779);
    \draw[fill=color0,draw opacity=0] (axis cs:22.81,0) rectangle (axis cs:25.64,23617);
    \draw[fill=color0,draw opacity=0] (axis cs:25.64,0) rectangle (axis cs:28.47,25025);
    \draw[fill=color0,draw opacity=0] (axis cs:28.47,0) rectangle (axis cs:31.3,31669);
    \draw[fill=color0,draw opacity=0] (axis cs:31.3,0) rectangle (axis cs:34.13,34166);
    \draw[fill=color0,draw opacity=0] (axis cs:34.13,0) rectangle (axis cs:36.96,23210);
    \draw[fill=color0,draw opacity=0] (axis cs:36.96,0) rectangle (axis cs:39.79,44418);
    \draw[fill=color0,draw opacity=0] (axis cs:39.79,0) rectangle (axis cs:42.62,75955);
    \draw[fill=color0,draw opacity=0] (axis cs:42.62,0) rectangle (axis cs:45.45,115104);
    \draw[fill=color0,draw opacity=0] (axis cs:45.45,0) rectangle (axis cs:48.28,134475);
    \draw[fill=color0,draw opacity=0] (axis cs:48.28,0) rectangle (axis cs:51.11,118184);
    \draw[fill=color0,draw opacity=0] (axis cs:51.11,0) rectangle (axis cs:53.94,63932);
    \draw[fill=color0,draw opacity=0] (axis cs:53.94,0) rectangle (axis cs:56.77,82566);
    \draw[fill=color0,draw opacity=0] (axis cs:56.77,0) rectangle (axis cs:59.6,60225);
    \draw[fill=color0,draw opacity=0] (axis cs:59.6,0) rectangle (axis cs:62.43,63657);
    \draw[fill=color0,draw opacity=0] (axis cs:62.43,0) rectangle (axis cs:65.26,55957);
    \draw[fill=color0,draw opacity=0] (axis cs:65.26,0) rectangle (axis cs:68.09,67243);
    \draw[fill=color0,draw opacity=0] (axis cs:68.09,0) rectangle (axis cs:70.92,36861);
    \draw[fill=color0,draw opacity=0] (axis cs:70.92,0) rectangle (axis cs:73.75,59323);
    \draw[fill=color0,draw opacity=0] (axis cs:73.75,0) rectangle (axis cs:76.58,64988);
    \draw[fill=color0,draw opacity=0] (axis cs:76.58,0) rectangle (axis cs:79.41,79321);
    \draw[fill=color0,draw opacity=0] (axis cs:79.41,0) rectangle (axis cs:82.24,69982);
    \draw[fill=color0,draw opacity=0] (axis cs:82.24,0) rectangle (axis cs:85.07,61325);
    \draw[fill=color0,draw opacity=0] (axis cs:85.07,0) rectangle (axis cs:87.9,36036);
    \draw[fill=color0,draw opacity=0] (axis cs:87.9,0) rectangle (axis cs:90.73,68596);
    \draw[fill=color0,draw opacity=0] (axis cs:90.73,0) rectangle (axis cs:93.56,94985);
    \draw[fill=color0,draw opacity=0] (axis cs:93.56,0) rectangle (axis cs:96.39,115357);
    \draw[fill=color0,draw opacity=0] (axis cs:96.39,0) rectangle (axis cs:99.22,107580);
    \draw[fill=color0,draw opacity=0] (axis cs:99.22,0) rectangle (axis cs:102.05,117744);
    \draw[fill=color0,draw opacity=0] (axis cs:102.05,0) rectangle (axis cs:104.88,69718);
    \draw[fill=color0,draw opacity=0] (axis cs:104.88,0) rectangle (axis cs:107.71,98978);
    \draw[fill=color0,draw opacity=0] (axis cs:107.71,0) rectangle (axis cs:110.54,115423);
    \draw[fill=color0,draw opacity=0] (axis cs:110.54,0) rectangle (axis cs:113.37,140316);
    \draw[fill=color0,draw opacity=0] (axis cs:113.37,0) rectangle (axis cs:116.2,146366);
    \draw[fill=color0,draw opacity=0] (axis cs:116.2,0) rectangle (axis cs:119.03,149006);
    \draw[fill=color0,draw opacity=0] (axis cs:119.03,0) rectangle (axis cs:121.86,108361);
    \draw[fill=color0,draw opacity=0] (axis cs:121.86,0) rectangle (axis cs:124.69,181005);
    \draw[fill=color0,draw opacity=0] (axis cs:124.69,0) rectangle (axis cs:127.52,208362);
    \draw[fill=color0,draw opacity=0] (axis cs:127.52,0) rectangle (axis cs:130.35,216040);
    \draw[fill=color0,draw opacity=0] (axis cs:130.35,0) rectangle (axis cs:133.18,193457);
    \draw[fill=color0,draw opacity=0] (axis cs:133.18,0) rectangle (axis cs:136.01,271216);
    \draw[fill=color0,draw opacity=0] (axis cs:136.01,0) rectangle (axis cs:138.84,123948);
    \draw[fill=color0,draw opacity=0] (axis cs:138.84,0) rectangle (axis cs:141.67,192027);
    \draw[fill=color0,draw opacity=0] (axis cs:141.67,0) rectangle (axis cs:144.5,219824);
    \draw[fill=color0,draw opacity=0] (axis cs:144.5,0) rectangle (axis cs:147.33,158114);
    \draw[fill=color0,draw opacity=0] (axis cs:147.33,0) rectangle (axis cs:150.16,142076);
    \draw[fill=color0,draw opacity=0] (axis cs:150.16,0) rectangle (axis cs:152.99,134783);
    \draw[fill=color0,draw opacity=0] (axis cs:152.99,0) rectangle (axis cs:155.82,187418);
    \draw[fill=color0,draw opacity=0] (axis cs:155.82,0) rectangle (axis cs:158.65,155287);
    \draw[fill=color0,draw opacity=0] (axis cs:158.65,0) rectangle (axis cs:161.48,188012);
    \draw[fill=color0,draw opacity=0] (axis cs:161.48,0) rectangle (axis cs:164.31,333949);
    \draw[fill=color0,draw opacity=0] (axis cs:164.31,0) rectangle (axis cs:167.14,491975);
    \draw[fill=color0,draw opacity=0] (axis cs:167.14,0) rectangle (axis cs:169.97,212762);
    \draw[fill=color0,draw opacity=0] (axis cs:169.97,0) rectangle (axis cs:172.8,495440);
    \draw[fill=color0,draw opacity=0] (axis cs:172.8,0) rectangle (axis cs:175.63,622644);
    \draw[fill=color0,draw opacity=0] (axis cs:175.63,0) rectangle (axis cs:178.46,445742);
    \draw[fill=color0,draw opacity=0] (axis cs:178.46,0) rectangle (axis cs:181.29,346731);
    \draw[fill=color0,draw opacity=0] (axis cs:181.29,0) rectangle (axis cs:184.12,570383);
    \draw[fill=color0,draw opacity=0] (axis cs:184.12,0) rectangle (axis cs:186.95,123090);
    \draw[fill=color0,draw opacity=0] (axis cs:186.95,0) rectangle (axis cs:189.78,120582);
    \draw[fill=color0,draw opacity=0] (axis cs:189.78,0) rectangle (axis cs:192.61,116842);
    \draw[fill=color0,draw opacity=0] (axis cs:192.61,0) rectangle (axis cs:195.44,77165);
    \draw[fill=color0,draw opacity=0] (axis cs:195.44,0) rectangle (axis cs:198.27,87076);
    \draw[fill=color0,draw opacity=0] (axis cs:198.27,0) rectangle (axis cs:201.1,80223);
    \draw[fill=color0,draw opacity=0] (axis cs:201.1,0) rectangle (axis cs:203.93,53779);
    \draw[fill=color0,draw opacity=0] (axis cs:203.93,0) rectangle (axis cs:206.76,64922);
    \draw[fill=color0,draw opacity=0] (axis cs:206.76,0) rectangle (axis cs:209.59,106810);
    \draw[fill=color0,draw opacity=0] (axis cs:209.59,0) rectangle (axis cs:212.42,93632);
    \draw[fill=color0,draw opacity=0] (axis cs:212.42,0) rectangle (axis cs:215.25,119174);
    \draw[fill=color0,draw opacity=0] (axis cs:215.25,0) rectangle (axis cs:218.08,71159);
    \draw[fill=color0,draw opacity=0] (axis cs:218.08,0) rectangle (axis cs:220.91,45496);
    \draw[fill=color0,draw opacity=0] (axis cs:220.91,0) rectangle (axis cs:223.74,60676);
    \draw[fill=color0,draw opacity=0] (axis cs:223.74,0) rectangle (axis cs:226.57,71115);
    \draw[fill=color0,draw opacity=0] (axis cs:226.57,0) rectangle (axis cs:229.4,46189);
    \draw[fill=color0,draw opacity=0] (axis cs:229.4,0) rectangle (axis cs:232.23,19734);
    \draw[fill=color0,draw opacity=0] (axis cs:232.23,0) rectangle (axis cs:235.06,86317);
    \draw[fill=color0,draw opacity=0] (axis cs:235.06,0) rectangle (axis cs:237.89,16357);
    \draw[fill=color0,draw opacity=0] (axis cs:237.89,0) rectangle (axis cs:240.72,16764);
    \draw[fill=color0,draw opacity=0] (axis cs:240.72,0) rectangle (axis cs:243.55,19041);
    \draw[fill=color0,draw opacity=0] (axis cs:243.55,0) rectangle (axis cs:246.38,14784);
    \draw[fill=color0,draw opacity=0] (axis cs:246.38,0) rectangle (axis cs:249.21,10989);
    \draw[fill=color0,draw opacity=0] (axis cs:249.21,0) rectangle (axis cs:252.04,7678);
    \draw[fill=color0,draw opacity=0] (axis cs:252.04,0) rectangle (axis cs:254.87,3091);
    \draw[fill=color0,draw opacity=0] (axis cs:254.87,0) rectangle (axis cs:257.7,10076);
    \draw[fill=color0,draw opacity=0] (axis cs:257.7,0) rectangle (axis cs:260.53,5104);
    \draw[fill=color0,draw opacity=0] (axis cs:260.53,0) rectangle (axis cs:263.36,5533);
    \draw[fill=color0,draw opacity=0] (axis cs:263.36,0) rectangle (axis cs:266.19,4191);
    \draw[fill=color0,draw opacity=0] (axis cs:266.19,0) rectangle (axis cs:269.02,3124);
    \draw[fill=color0,draw opacity=0] (axis cs:269.02,0) rectangle (axis cs:271.85,1738);
    \draw[fill=color0,draw opacity=0] (axis cs:271.85,0) rectangle (axis cs:274.68,3212);
    \draw[fill=color0,draw opacity=0] (axis cs:274.68,0) rectangle (axis cs:277.51,1078);
    \draw[fill=color0,draw opacity=0] (axis cs:277.51,0) rectangle (axis cs:280.34,539);
    \draw[fill=color0,draw opacity=0] (axis cs:280.34,0) rectangle (axis cs:283.17,2178);
    \draw[fill=color0,draw opacity=0] (axis cs:283.17,0) rectangle (axis cs:286,330);
    \end{axis}

    \end{tikzpicture}
    \caption{Proof Tree Heights}
    \label{proofheight}
\end{subfigure}
\caption{Proof Tree Construction and Statistics}
\end{figure}
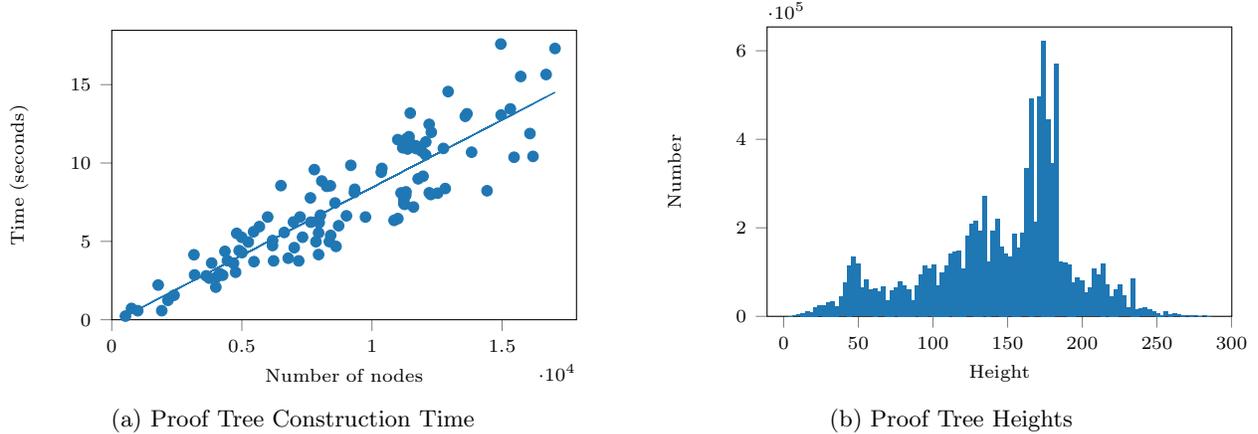

\section{Related Work} \label{sec:related}
Debugging for logic programming languages has a long history, with work having been done on algorithmic debugging strategies since the 1980s \cite{ad89,ap83}. These works present a framework for the algorithmic debugging method for Prolog programs, where a system asks the user questions about the intended model of the program to find buggy rules. However, they are based on the SLDNF resolution of Prolog which is not truly declarative, and thus the semantics of Datalog differ. Our method aligns practically with the interactive debugging frameworks presented here, but applied to the bottom-up evaluation of Datalog, and with sophisticated and efficient techniques to generate the debugging information.

\textbf{Debugging and Provenance.} Our method also fits into the established frameworks for provenance in Datalog~\cite{pd09} and debugging for Datalog~\cite{tf08}. The proof trees generated by our method are analogous to the computation graphs presented in~\cite{tf08}, and are equally effective for debugging. They can also be seen as an extension of \emph{how}-provenance~\cite{ww01}. However, our hybrid method for generating proof trees is novel, i.e., it permits multiple debugging queries in a single debugging cycle. Hence, our method is especially useful for debugging large Datalog specifications.

\textbf{Provenance in Datalog.} Methods for computing provenance in Datalog has been a well-explored field \cite{ep93,dd12,cd14,sp15,dw15,ec17}, however with the caveat that all these previous approaches store the full provenance information during the evaluation of Datalog. \cite{dd12}, for example, stores the whole computation graph as an auxiliary relation during Datalog evaluation, which may be many times larger than the IDB itself in large analysis use cases. Approaches such as \cite{sp15,ec17} attempt to reduce the impact of provenance storage by only storing information relevant for a particular provenance query, which is given before the instrumentation and evaluation of Datalog. Thus, in these approaches, the Datalog specification needs to be re-evaluated for each different provenance query and therefore extends the investigation phase of the debugging cycle. The closest approach to ours is perhaps \cite{cd14}, where a boolean circuit representation for provenance is described, as well as an algorithm for generating such a boolean circuit during Datalog evaluation. However, there is no mechanism for exploring an understandable provenance representation, and no practical implementation of this work. Therefore, our approach is novel by minimizing the storage overhead of provenance information, allowing interactive exploration of proof trees, all while providing an effective integration into an existing semi-na\"ive Datalog engine.

\textbf{Other Applications for Datalog Provenance.} Debugging Datalog specifications is not the only use case for provenance, with user-guided approaches~\cite{ei17,ug15,ug18,ar14} for program analysis also relying on tracking the origins of data. In \cite{ug18,ei17,ug15}, a user may tag certain static analysis alarms, to increase or decrease their importance in the next analysis cycle. In \cite{ar14}, the analysis system automatically generates an appropriate abstraction, by iteratively trying and refining failing abstractions. All these approaches rely on an annotation framework for Datalog: the user-guided systems require the user to add an annotation representing the importance of an alarm, and the abstraction refinement system requires the system to tag failing analyses with annotations. In any case, our provenance evaluation strategy would fit well into these systems, by providing an annotation framework at the Datalog engine level.

\textbf{Provenance in Databases.} Outside of Datalog, provenance has also been a 
focus of the database community, being useful for understanding the origins of data 
in large database systems. For instance, \emph{Trio}~\cite{ta05,ai06} and 
\emph{Perm}~\cite{pp09,us13} are two such systems implementing provenance systems for 
relational databases. These systems focus on tracking the \emph{lineage} of data in 
a database, rather than on debugging a query, and thus it is essential that the 
provenance information is stored directly alongside the data. It is important to note 
that in the context of databases, Datalog acts as a powerful query language rather 
then a specification logic. As a result, the Datalog specifications in these use cases 
typically consist of fewer rules~\cite{openrule}, that do not exhibit 
complex patterns found in program analysis 
benchmarks~\cite{doop09}. Further study has also been undertaken in querying
database provenance, with \cite{dp15, ur18} both presenting mechanisms to construct
provenance information lazily after the database query is run. Thus, similarly to our
approach, these allow the querying of arbitrary parts of provenance information.
However, both approaches are applied to database systems, with \cite{dp15}
reconstructing provenance information based on tracking file I/O and system calls during
query evaluation, and \cite{ur18} based on system logs produced by the database system.
Therefore, with the highly complex recursive nature of real-world Datalog specifications, similar
information may blow up drastically, and experiments of both works only show scalability
up to 10,000 tuples. Moreover, these approaches are unsuitable for our setting of
in-memory analysis workloads.

\section{Conclusion} \label{sec:conclusion}
In this paper we have presented a novel provenance evaluation strategy for 
Datalog specifications. The provenance evaluation strategy extends
tuples in the IDB with proof annotations. With the help of proof annotations, 
provenance queries can construct minimal proof trees incrementally. 
Our method has very small overheads at
logic evaluation time in comparison to standard top-down evaluation or a na\"ive provenance methods
that encode provenance information explicitly. 
Hence, our method enables debugging of large-scale logic specifications for the first time.
We have implemented our provenance method in a high performance Datalog engine called
\souffle~\cite{so16}, and demonstrated its feasibility through the \DOOP program analysis framework. 
We show that the runtime overheads of the provenance evaluation strategy are 
approximately 1.27$\times$, and the memory overheads are 1.45$\times$. 

\bibliographystyle{abbrv}
\bibliography{bibliography}

\begin{thebibliography}{10}

\bibitem{souffle}
souffle-lang/souffle: Souffl\'e is a variant of datalog for tool designers
  crafting analyses in horn clauses. souffl\'e synthesizes a native parallel
  c++ program from a logic specification. more information can be found here:
  http://souffle-lang.github.io/, 2017.
\newblock Accessed: 19-10-2017.

\bibitem{fd95}
S.~Abiteboul, R.~Hull, and V.~Vianu.
\newblock {\em Foundations of Databases}.
\newblock Addison-Wesley Publishing Company, 1995.

\bibitem{pointsto15}
N.~Allen, B.~Scholz, and P.~Krishnan.
\newblock {\em Staged Points-to Analysis for Large Code Bases}, pages 131--150.
\newblock Springer Berlin Heidelberg, 2015.

\bibitem{ur18}
B.~S. {Arab}, D.~{Gawlick}, V.~{Krishnaswamy}, V.~{Radhakrishnan}, and
  B.~{Glavic}.
\newblock Using reenactment to retroactively capture provenance for
  transactions.
\newblock {\em IEEE Transactions on Knowledge and Data Engineering},
  30(3):599--612, March 2018.

\bibitem{logicblox15}
M.~Aref, B.~ten Cate, T.~J. Green, B.~Kimelfeld, D.~Olteanu, E.~Pasalic, T.~L.
  Veldhuizen, and G.~Washburn.
\newblock Design and implementation of the logicblox system.
\newblock In {\em Proceedings of the 2015 ACM SIGMOD International Conference
  on Management of Data}, SIGMOD '15, pages 1371--1382, New York, NY, USA,
  2015. ACM.

\bibitem{ep93}
T.~Arora, R.~Ramakrishnan, W.~G. Roth, P.~Seshadri, and D.~Srivastava".
\newblock Explaining program execution in deductive systems.
\newblock {\em Proceedings of Deductive and Object-Oriented Databases: Third
  International Conference}, pages 101--119, 1993.

\bibitem{ai06}
O.~Benjelloun, A.~D. Sarma, C.~Hayworth, and J.~Widom.
\newblock An introduction to uldbs and the trio system.
\newblock {\em IEEE Data Engineering Bulletin}, 2006.

\bibitem{doop09}
M.~Bravenboer and Y.~Smaragdakis.
\newblock Strictly declarative specification of sophisticated points-to
  analyses.
\newblock {\em SIGPLAN Not.}, 44(10):243--262, 2009.

\bibitem{ww01}
P.~Buneman, S.~Khanna, and W.-C. Tan.
\newblock Why and where: A characterization of data provenance.
\newblock {\em Proceedings of the International Conference on Database Theory},
  1973:316--330, 2001.

\bibitem{tf08}
R.~Caballero, Y.~Garc{\'i}a-Ruiz, and F.~S{\'a}enz-P{\'e}rez.
\newblock A theoretical framework for the declarative debugging of datalog
  programs.
\newblock In K.-D. Schewe and B.~Thalheim, editors, {\em Semantics in Data and
  Knowledge Bases}, pages 143--159, Berlin, Heidelberg, 2008. Springer Berlin
  Heidelberg.

\bibitem{dw15}
R.~Caballero, Y.~García-Ruiz, and F.~Sáenz-Pérez.
\newblock Debugging of wrong and missing answers for datalog programs with
  constraint handling rules.
\newblock 07 2015.

\bibitem{pd09}
J.~Cheney, L.~Chiticariu, and W.-C. Tan.
\newblock Provenance in databases: Why, how, and where.
\newblock {\em Foundations and Trends in Databases}, 1:379--474, 2009.

\bibitem{sp15}
D.~Deutch, A.~Gilad, and Y.~Moskovitch.
\newblock Selective provenance for datalog programs using top-k queries.
\newblock {\em Proceedings of the VLDB Endowment}, 8:1394--1405, 2015.

\bibitem{Deutch2018}
D.~Deutch, A.~Gilad, and Y.~Moskovitch.
\newblock Efficient provenance tracking for datalog using top-k queries.
\newblock {\em The VLDB Journal}, 27(2):245--269, Apr 2018.

\bibitem{cd14}
D.~Deutch, T.~Milo, S.~Roy, and V.~Tannen.
\newblock Circuits for datalog provenance.
\newblock {\em Conference on Database Theory}, 17:201--212, 2014.

\bibitem{ad89}
W.~Drabent, S.~Nadjm-Tehrani, and J.~Maluszy\'{n}ski.
\newblock Meta-programming in logic programming.
\newblock pages 501--521, 1989.

\bibitem{futamura99}
Y.~Futamura.
\newblock Partial evaluation of computation process -- an approach to a
  compiler-compiler.
\newblock {\em Higher Order Symbol. Comput.}, 12(4):381--391, Dec. 1999.

\bibitem{pp09}
B.~Glavic and G.~Alonso.
\newblock Perm: Processing provenance and data on the same data model through
  query rewriting.
\newblock {\em IEEE International Conference on Data Engineering}, 25:174--185,
  2009.

\bibitem{us13}
B.~Glavic, R.~J. Miller, and G.~Alonso3.
\newblock Using sql for efficient generation and querying of provenance
  information.
\newblock {\em Lecture Notes in Computer Science}, 8000:291--320, 2013.

\bibitem{icse19}
N.~Grech, L.~Brent, B.~Scholz, and Y.~Smaragdakis.
\newblock Gigahorse: Thorough, declarative decompilation of smart contracts.
\newblock In {\em Proceedings of the 41th International Conference on Software
  Engineering, {ICSE} 2019}, page (to appear), Montreal, QC, Canada, May 2019.
  ACM.

\bibitem{mm18}
N.~Grech, M.~Kong, A.~Jurisevic, L.~Brent, B.~Scholz, and Y.~Smaragdakis.
\newblock Madmax: Surviving out-of-gas conditions in ethereum smart contracts.
\newblock In {\em SPLASH 2018 OOPSLA}, 2018.

\bibitem{z311}
K.~Hoder, N.~Bj{\o}rner, and L.~de~Moura.
\newblock $\mu$z-- an efficient engine for fixed points with constraints.
\newblock In G.~Gopalakrishnan and S.~Qadeer, editors, {\em Computer Aided
  Verification}, pages 457--462, Berlin, Heidelberg, 2011. Springer Berlin
  Heidelberg.

\bibitem{de11}
S.~S. Huang, T.~J. Green, and B.~T. Loo.
\newblock Datalog and emerging applications: An interactive tutorial.
\newblock In {\em Proceedings of the 2011 ACM SIGMOD International Conference
  on Management of Data}, SIGMOD '11, pages 1213--1216. ACM, 2011.

\bibitem{so16}
H.~Jordan, B.~Scholz, and P.~Subotic.
\newblock Souffl{\'{e}}: On synthesis of program analyzers.
\newblock {\em Proceedings of Computer Aided Verification}, 28:422--430, 2016.

\bibitem{ppopp19}
H.~Jordan, P.~Suboti\'{c}, D.~Zhao, and B.~Scholz.
\newblock A specialized b-tree for concurrent datalog evaluation.
\newblock In {\em Proceedings of the 24th Symposium on Principles and Practice
  of Parallel Programming}, PPoPP '19, pages 327--339, New York, NY, USA, 2019.
  ACM.

\bibitem{dd12}
S.~K{\"o}hler, B.~Lud{\"a}scher, and Y.~Smaragdakis.
\newblock Declarative datalog debugging for mere mortals.
\newblock {\em Lecture Notes in Computer Science}, 7494:111--122, 2012.

\bibitem{ec17}
S.~Lee, S.~K{\"{o}}hler, B.~Lud{\"{a}}scher, and B.~Glavic.
\newblock Efficiently computing provenance graphs for queries with negation.
\newblock {\em CoRR}, abs/1701.05699, 2017.

\bibitem{ps18}
S.~Lee, B.~Lud\"{a}scher, and B.~Glavic.
\newblock Provenance summaries for answers and non-answers.
\newblock {\em Proc. VLDB Endow.}, 11(12):1954--1957, Aug. 2018.

\bibitem{openrule}
S.~Liang, P.~Fodor, H.~Wan, and M.~Kifer.
\newblock Openrulebench: An analysis of the performance of rule engines.
\newblock In {\em Proceedings of the 18th International Conference on World
  Wide Web}, WWW '09, pages 601--610, 2009.

\bibitem{flix16}
M.~Madsen, M.-H. Yee, and O.~Lhot\'{a}k.
\newblock From datalog to flix: A declarative language for fixed points on
  lattices.
\newblock In {\em Proceedings of the 37th ACM SIGPLAN Conference on Programming
  Language Design and Implementation}, PLDI '16, pages 194--208, New York, NY,
  USA, 2016. ACM.

\bibitem{ug15}
R.~Mangal, X.~Zhang, A.~V. Nori, and M.~Naik.
\newblock A user-guided approach to program analysis.
\newblock In {\em Proceedings of the 2015 10th Joint Meeting on Foundations of
  Software Engineering}, ESEC/FSE 2015, pages 462--473, New York, NY, USA,
  2015. ACM.

\bibitem{mv05}
X.~Ou, S.~Govindavajhala, and A.~W. Appel.
\newblock Mulval: A logic-based network security analyzer.
\newblock In {\em Proceedings of the 14th Conference on USENIX Security
  Symposium - Volume 14}, SSYM'05, pages 8--8, Berkeley, CA, USA, 2005. USENIX
  Association.

\bibitem{ug18}
M.~Raghothaman, S.~Kulkarni, K.~Heo, and M.~Naik.
\newblock User-guided program reasoning using bayesian inference.
\newblock pages 722--735, 06 2018.

\bibitem{td91}
R.~Ramakrishnan and S.~Sudarshan.
\newblock Top-down vs. bottom-up revisited.
\newblock In {\em In Proceedings of the International Logic Programming
  Symposium}, pages 321--336. MIT Press, 1991.

\bibitem{ap83}
E.~Y. Shapiro.
\newblock {\em Algorithmic Program DeBugging}.
\newblock MIT Press, Cambridge, MA, USA, 1983.

\bibitem{Sridharan:2005:DPA:1094811.1094817}
M.~Sridharan, D.~Gopan, L.~Shan, and R.~Bod\'{\i}k.
\newblock Demand-driven points-to analysis for java.
\newblock In {\em Proceedings of the 20th Annual ACM SIGPLAN Conference on
  Object-oriented Programming, Systems, Languages, and Applications}, OOPSLA
  '05, pages 59--76, New York, NY, USA, 2005. ACM.

\bibitem{dp15}
M.~Stamatogiannakis, P.~Groth, and H.~Bos.
\newblock Decoupling provenance capture and analysis from execution.
\newblock In {\em 7th {USENIX} Workshop on the Theory and Practice of
  Provenance (TaPP 15)}, Edinburgh, Scotland, 2015. {USENIX} Association.

\bibitem{vldb18}
P.~Subotic, H.~Jordan, L.~Chang, A.~Fekete, and B.~Scholz.
\newblock Automatic index selection for large-scale datalog computation.
\newblock {\em {PVLDB}}, 12(2):141--153, 2018.

\bibitem{bb89}
J.~D. Ullman.
\newblock Bottom-up beats top-down for datalog.
\newblock In {\em Proceedings of the Eighth ACM SIGACT-SIGMOD-SIGART Symposium
  on Principles of Database Systems}, PODS '89, pages 140--149, New York, NY,
  USA, 1989. ACM.

\bibitem{bddbddb05}
J.~Whaley, D.~Avots, M.~Carbin, and M.~S. Lam.
\newblock {\em Using Datalog with Binary Decision Diagrams for Program
  Analysis}, pages 97--118.
\newblock Springer Berlin Heidelberg, Berlin, Heidelberg, 2005.

\bibitem{ta05}
J.~Widom.
\newblock Trio: A system for integrated management of data, accuracy, and
  lineage.
\newblock In {\em 2nd Biennial Conference on Innovative Data Systems Research,
  CIDR 2005}, pages 262--276, 01 2005.

\bibitem{ei17}
X.~Zhang, R.~Grigore, X.~Si, and M.~Naik.
\newblock Effective interactive resolution of static analysis alarms.
\newblock {\em Proc. ACM Program. Lang.}, 1(OOPSLA):57:1--57:30, Oct. 2017.

\bibitem{ar14}
X.~Zhang, R.~Mangal, R.~Grigore, M.~Naik, and H.~Yang.
\newblock On abstraction refinement for program analyses in datalog.
\newblock In {\em Proceedings of the 35th ACM SIGPLAN Conference on Programming
  Language Design and Implementation}, PLDI '14, pages 239--248, New York, NY,
  USA, 2014. ACM.

\bibitem{davidsthesis}
D.~Zhao.
\newblock Honours thesis: Large-scale provenance for souffl\'e.
\newblock 2017.

\bibitem{eq10}
W.~Zhou, M.~Sherr, T.~Tao, X.~Li, B.~T. Loo, and Y.~Mao.
\newblock Efficient querying and maintenance of network provenance at
  internet-scale.
\newblock {\em Proceedings of the 2010 ACM SIGMOD International Conference on
  Management of Data}, pages 615--626, 2010.

\end{thebibliography}

\end{document}